\documentclass[11pt]{article}
\usepackage{fullpage}
\usepackage{amssymb}
\usepackage{amsmath}
\usepackage{amsthm}
\usepackage{epsfig}
\usepackage{makeidx}
\usepackage{graphicx}
\usepackage{xcolor}

\newcommand{\commentout}[1]{}


\newtheorem{theorem}{Theorem}

\newtheorem{definition}{Definition}

\newfont{\ssbb}{cmssbx10 scaled\magstep2}
\newfont{\ssb}{cmssbx10 scaled\magstep1}
\newfont{\stb}{cmssbx10}
\newfont{\teniu}{cmu10}

\newcommand{\rb}[2]{\raisebox{#1 mm}[0mm][0mm]{#2}}

\newcommand{\ThreeSUM}{\textsf{3SUM}}

\newcommand{\degen}{\mathcal{\delta}}
\newcommand{\lsc}{lsc}
\newcommand{\clb}{CLB}

\newcommand{\Patrascu}{P\v{a}tra\c{s}cu}
\newcommand{\rl}{\mathcal{L}}
\newcommand{\ot}{\tilde O}
\newcommand{\op}{op}

\pagestyle{plain}

\begin{document}
    \title{Mind the Gap}
  \author{
    Amihood Amir \thanks{Bar-Ilan University and Johns Hopkins University, \texttt{amir@cs.biu.ac.il} } \and
 Tsvi Kopelowitz\thanks{University of Michigan, \texttt{kopelot@gmail.com}} \and
 Avivit Levy\thanks{Shenkar College, \texttt{avivitlevy@shenkar.ac.il}} \and
  Seth Pettie\thanks{University of Michigan, \texttt{pettie@umich.edu}} \and
\and Ely Porat\thanks{Bar-Ilan University, \texttt{porately@cs.biu.ac.il}} \and
\and B.~Riva Shalom\thanks{Shenkar College, \texttt{rivash@shenkar.ac.il}}
  }

    \author{
    Amihood Amir\\
    {\footnotesize Bar-Ilan University}\\ {\footnotesize and Johns Hopkins University}\\
    \footnotesize \texttt{amir@cs.biu.ac.il}
    \and
    Tsvi Kopelowitz\\
    \footnotesize University of Michigan\\
    \footnotesize \texttt{kopelot@gmail.com}
    \and
    Avivit Levy\\
    \footnotesize Shenkar College\\
    \footnotesize \texttt{avivitlevy@shenkar.ac.il}
    \and
    Seth Pettie\\
    \footnotesize University of Michigan\\
    \footnotesize \texttt{pettie@umich.edu}
    \and
    Ely Porat\\
    \footnotesize Bar-Ilan University\\
    \footnotesize \texttt{porately@cs.biu.ac.il}
    \and
    B.~Riva Shalom\\
    \footnotesize Shenkar College\\
    \footnotesize \texttt{rivash@shenkar.ac.il}
  }

  \date{}
  \maketitle
\thispagestyle{empty}

\setcounter{page}{0}

\begin{abstract}
We examine the complexity of the online Dictionary Matching with One Gap Problem (DMOG) which is the following.
Preprocess a dictionary $D$ of $d$ patterns, where each pattern contains a special \emph{gap} symbol that can match any string, so that given a text that arrives online, a character at a time, we can report all of the patterns from $D$ that are suffixes of the text that has arrived so far, before the next character arrives. In more general versions the gap symbols are associated with \emph{bounds} determining the possible lengths of matching strings.

Finding efficient algorithmic solutions for (online) DMOG has proven to be a difficult algorithmic challenge. Little progress has been made on this problem and to date and no truly efficient solutions are known. However, the need for efficient practical solutions has been on a rise since the online DMOG is a bottleneck procedure in the realm of cyber-security, as many digital signatures of viruses manifest themselves as patterns with a gap. Indeed, this paper was invoked by such a challenge.

We first demonstrate that the difficulty in obtaining efficient solutions for the DMOG problem even, in the offline setting, can be traced back to the infamous \ThreeSUM{} conjecture, by showing that an efficient solution for DMOG implies that the \ThreeSUM{} conjecture is false. Interestingly, our reduction deviates from the known reduction paths that follow from \ThreeSUM{}. In particular, most reductions from \ThreeSUM{} go through the set-disjointness problem, which corresponds to the problem of preprocessing a graph to answer edge-triangles queries: given an edge, report all of the triangles containing this edge. We use a new path of reductions by considering the complementary, although structurally very different, \emph{vertex-triangles} queries: given an vertex, report all of the triangles containing this vertex. Using this new path we show a conditional lower bound of $\Omega(\degen(G_D)+\op)$ time per text character, where $G_D$ is a bipartite graph that captures the structure of $D$, $\degen(G_D)$ is the \emph{degeneracy} of this graph, and $\op$ is the output size. Moreover, we show a conditional lower bound in terms of the magnitude of gaps for the bounded case, thereby showing that some known upper bounds are optimal.

We also provide matching upper-bounds (up to sub-polynomial factors) for the vertex-triangles problem, and then extend these techniques to the online DMOG problem. In particular, we introduce algorithms whose time cost depends linearly on $\degen(G_D)$. Our algorithms make use of graph orientations, together with some additional techniques. These algorithms are of practical interest since although $\degen(G_D)$ can be as large as $\sqrt{d}$, and even larger if $G_D$ is a multi-graph, it is typically a very small constant in practice. Finally, when $\degen(G_D)$ is large we are able to obtain even more efficient solutions\commentout{ which are mostly of theoretical interest}.
\end{abstract}

\section{Introduction}\label{s:introduction}
Understanding the computational limitations of algorithmic problems can often lead to algorithms for such problems that work well in practice.
This paper, which stemmed from an industrial-acdemic connection~\cite{VerInt}, is a prime example of where proving algorithmic lower bounds lead to efficient practical algorithmic upper bounds. We focus on an aspect of Cyber-security which is a critical modern challenge. Network intrusion detection systems (NIDS) perform protocol analysis, content searching and content matching, in order to detect harmful software. Such malware may appear non-contiguously, scattered across several packets, which necessitates matching {\em gapped} patterns. A {\em gapped pattern} $P$ is one of the form
$P_1\ \{ \alpha_{1},\beta_{1} \}\ P_{2}\  \{\alpha_{2},\beta_{2}\}\ldots \  \{\alpha_{k},\beta_{k}\} \  P_{k+1}$, where each subpattern
$P_{j}$ is a string over alphabet $\Sigma$, and $\{\alpha_{j},\beta_{j}\}$ matches any substring of length at least $\alpha_j$ and at most $\beta_j$. Gapped patterns considered in NIDS systems typically have only \emph{one} gap. Though the gapped pattern matching problem arose over 20 years ago in computational biology applications~\cite{mm-93,hpfb-99} and has been revisited many times in the intervening years (e.g.~\cite{myers-92,bt-10,mpvz-05,bgvw:12,fp-08,nr-03,ZZH:10}), in this paper we study what is apparently a mild generalization of the problem that has nonetheless resisted many researcher's attempts at finding a definitive efficient solution.

The set of $d$ patterns to be detected, called a {\em dictionary}, could be quite large. While dictionary matching is well studied (see, e.g.~\cite{AC75,AmirFIPS95,BG:96,aklllr00,cgl-04}), NIDS applications motivate the {\em dictionary matching with one gap} problem, defined formally as follows.

\begin{definition}\label{d:gappedDictionaryMatch} {\em The Dictionary Matching with One Gap Problem (DMOG)}, is:\\
\begin{tabular}{ll}
Input: & A text $T$ of length $|T|$ over alphabet $\Sigma$, and a dictionary $D$ of $d$ gapped patterns\\
       & $P_1,\ldots,P_d$ over alphabet $\Sigma$ where each pattern has at most one gap.\\
  Output: & All locations in $T$ where a pattern $P_i\in D$, $1\leq i\leq d$, ends.\\
\end{tabular}
\end{definition}

In the offline DMOG problem $T$ and $D$ are presented all at once. We study the more practical {\em online} DMOG problem.
The dictionary $D$ can be preprocessed in advance, resulting in a data structure. Given this data structure the text $T$ is presented
one character at a time, and when a character arrives the subset of patterns with a match ending at this character should be reported before the next character arrives. Three cost measures are of interest: a preprocessing time, a time per character, and a time per match reported. Online DMOG is a serious bottleneck for NIDS, and has received much attention from both the industry and the academic community.

\paragraph{\textbf{Previous Work.}} Finding efficient solutions for DMOG has proven to be a difficult algorithmic challenge as, unfortunately, little progress has been obtained on this problem even though many researchers in the pattern matching community and the industry have tackled it. Table~\ref{tab} describes a summary and comparison of previous work. It illustrates that previous results are inadequate for NIDS, on the one hand, and that our upper bounds are basically optimal, on the other hand.

\begin{table}
\centering
\footnotesize{
  \begin{tabular}{|c|c|c|c|c|}
    \hline
     & Preprocessing & Total Query Time & Algorithm & Remark \\
    & Time & & Type & \\
    \hline
    \hline
    \rb{-2}{\cite{KuRu97}} & \rb{-2}{none} & \rb{-2}{$\tilde{O}(|T|+|D|)$} & \rb{-2}{online} & reports only \\
    & & &  & first occurrence \\
    \hline
    \rb{-2}{\cite{ZZH:10}} & \rb{-2}{$O(|D|)$} & \rb{-2}{$\tilde{O}(|T|+d)$} & \rb{-2}{online} & reports only \\
    & & & & first occurrence \\
    \hline
    \rb{-2}{\cite{hsss:11}} & \rb{-2}{$O(|D|)$} & \rb{-2}{$O(|T|\cdot \lsc + socc)$} & \rb{-2}{online} & reports one occurrence \\
     & & & & per pattern and location \\
    \hline
    \cite{alps:14} & $\ot(|D|)$ & $\ot(|T|(\beta -\alpha) + op)$ & offline & DMOG \\
    \hline
    \cite{HLSTTY:15} & $O(|D|)$ & $\ot(|T|(\beta^*-\alpha^*) + op)$ & offline & DMOG \\
    \hline
    \hline
    This &  &  &  & \\
    paper & \rb{2}{$O(|D|)$} & \rb{2}{$\ot(|T|\cdot\delta(G_D)\cdot\lsc +op)$} & \rb{2}{online} & \rb{2}{DMOG}\\
    \hline
     This & $O(|D|)$ & $\Omega(|T|\cdot\delta(G_D)^{1-o(1)} +op)$ & online & \rb{-2}{DMOG}\\
    paper & $O(|D|)$ & $\Omega(|T|\cdot(\beta-\alpha)^{1-o(1)} +op)$ & or offline & \\
    \hline
  \end{tabular}\\
\caption{\footnotesize Comparison of previous work and some new results. The parameters: $\lsc$ is the longest suffix chain of subpatterns in $D$, $socc$ is the number of subpatterns occurrences in $T$, $op$ is the number of  pattern occurrences in $T$, $\alpha^*$ and $\beta^*$ are the minimum left and maximum right gap borders in the non-uniformly bounded case, $\delta(G_D)$ is the degeneracy of the graph $G_D$ representing dictionary $D$.}\label{tab}
}
\end{table}

\subsection{Our Results}
The DMOG problem has several natural parameters, e.g., $|D|,d$, and the magnitude of the gap. We establish almost sharp upper and lower bounds for the cases of unbounded gaps ($\alpha=0, \beta=\infty$), uniformly bounded gaps where all patterns have the same bounds on their gap, and the most general non-uniform gaps version. We show that the complexity of DMOG actually depends on a ``hidden'' parameter that is a function of the {\em structure} of the gapped patterns. The dictionary $D$ can be represented as a graph $G_D$, which is a multi-graph in the non-uniformly bounded gaps case, where vertices correspond to first or second subpatterns and edges correspond to patterns. We use the notion of graph {\em degeneracy} $\degen(G_D)$ which is defined as follow. The degeneracy of an undirected graph $G=(V,E)$ is
$\degen(G) = \max_{U\subseteq V} \min_{u\in U} d_{G_U}(u)$, where $d_{G_U}$ is the degree of $u$ in the subgraph of $G$ induced by
$U$. In words, the degeneracy of $G$ is the largest minimum degree of any subgraph of $G$. It is straightforward to see that a non-multi graph $G$ with $m$ edges has $\degen(G) = O(\sqrt m)$, and a clique has $\degen(G) = \Theta(\sqrt m)$. The degeneracy of a multi-graph can be much higher.

\paragraph{Vertex-triangle queries.} A key component in understanding both the upper and lower bounds for DMOG is the \emph{vertex-triangles} problem, where the goal is to preprocess a graph so that given a query vertex $u$ we may list all triangles that contain $u$. The vertex-triangles problem, besides being a natural graph problem, is of particular interest here since, as will be demonstrated in Section~\ref{s:conditional-lower-bound}, it is reducible to DMOG. Our reduction demonstrates that the complexity of the DMOG problem already emerges when all patterns are of the form of two characters separated by an unbounded gap. This simplified online DMOG problem is equivalent to the following {\em Induced Subgraph} (ISG) problem. Preprocess a directed graph $G=(V,E)$ such that given a sequence of vertices online, after vertex $v_i$ all edges $(v_j,v_i)\in E$ with $j<i$ are reported. Thus, characters and gapped patterns in DMOG correspond to vertices and edges in ISG, respectively. We show that vertex-triangles queries are reducible to ISG.

This reduction serves two purposes. First, in Section~\ref{s:conditional-lower-bound} we prove a \emph{conditional lower bound} (\clb{}) for DMOG based on the \ThreeSUM{} conjecture by combining a straightforward reduction from triangle enumeration to the vertex-triangles problem with our new reduction from the vertex-triangles problem to DMOG. Our lower bound states that any online DMOG algorithm with low preprocessing and reporting costs must spend $\Omega(\degen(G_D)^{1-o(1)})$ per character, assuming the \ThreeSUM{} conjecture. Interestingly, the path for proving this \clb{} deviates from the common conceptual paradigms for proving lower bounds conditioned on the \ThreeSUM{} conjecture, and is of independent interest. We provide an overview of this method in Section~\ref{ss:LB_story} and the details appear in Section~\ref{s:conditional-lower-bound}. Moreover, our \clb{} holds for the \emph{offline} case as well, and can be rephrased in terms of other parameters. For example, in the DMOG problem with uniform gaps $\{\alpha,\beta\}$, we prove that the per character cost of scanning $T$ must be $\Omega((\beta-\alpha)^{1-o(1)})$. This gives some indication that some recent algorithms for the offline version of DMOG problem are almost optimal (\cite{alps:14,HLSTTY:15}).

Second, in Section~\ref{s:induced} we provide optimal solutions, up to subpolynomial factors, for ISG and, therefore, also for vertex-triangles queries, with $O(|E|)$ preprocessing time and $O(\degen(G)+op)$ time per each vertex, where $op$ is the size of the output due to the vertex. The connection between ISG and DMOG led us to extend the techniques used to solve ISG, combine them with additional ideas and techniques, thereby introduce several new online DMOG algorithms whose dependence on $\degen(G)$ is linear. Thus, graph degeneracy seems to capture the intrinsic complexity of the problem. On the other hand, the statement of our general algorithmic results is actually a bit more complicated as it depends on other parameters of the input, namely $\lsc$, the {\em longest suffix chain} in the dictionary, i.e., the longest sequence of dictionary subpatterns such that each is a proper suffix of the next. While the parameter $\lsc$ could theoretically be as large as $d$, in practice it is very small~\cite{VerInt}. Nevertheless, we also present algorithms that in the most dense cases reduce the dependence on $\lsc$. We provide an overview of our algorithms in Section~\ref{ss:UB_story}, and the details appear in Sections~\ref{s:induced},~\ref{s:DMOGinduced}, and~\ref{s:threshold}.

\paragraph{Lower Bounds Leading to Practical Upper Bounds.} After trying to tackle the DMOG problem from the upper bound perspective, we suspected that a lower bound could be proven, and indeed were successful in showing a connection to the \ThreeSUM{} conjecture. The \clb{} proof provides insight for the inherent difficulty in solving DMOG, but is also unfortunate news for those attempting to find efficient upper bounds.

Fortunately, after a careful examination of the reduction from \ThreeSUM{} to DMOG we noticed that the \clb{} from the \ThreeSUM{} conjecture can be phrased in terms of $\degen(G_D)$, which turns out to be a small constant in the input instances considered by NIDS. This lead to designing algorithms whose runtime can be expressed in terms of $\degen(G_D)$, and can therefore be helpful in practical settings.

\subsection{The Lower Bound Story}\label{ss:LB_story}
Solving DMOG with poly-logarithmic time bounds seems to be an extremely difficult task, leading
to the question of finding a polynomial time lower bound. Polynomial (unconditional) lower bounds for data structure problems are considered beyond the reach of current techniques. Thus, it has recently become extremely popular to prove \clb{}s based on the {\em conjectured} hardness of some problem. One of the most popular conjectures for \clb{}s is that the \ThreeSUM{} problem (given $n$ integers determine if any three sum to zero) cannot be solved in truly subquadratic time, where truly subquadratic time is $O(n^{2-\Omega(1)})$ time. This conjecture holds even if the the algorithm is allowed to use randomization (see e.g.~\cite{Patrascu10,AW14,KPP14a,GronlundP14}). In Section~\ref{s:conditional-lower-bound} we show that the infamous \ThreeSUM{} problem can be reduced to DMOG, which sheds some light on the difficulty of the DMOG problem. Interestingly, our reduction does not follow the common paradigm for proving \clb{}s based on the \ThreeSUM{} conjecture, providing a new approach for reductions from \ThreeSUM{}. This approach is of independent interest, and is described next.

\paragraph{Triangles.} \Patrascu{}~\cite{Patrascu10} showed that \ThreeSUM{} can be reduced to enumerating triangles in a tripartite graph. Kopelowitz, Pettie, and Porat~\cite{KPP14a} provided more efficient reductions, thereby showing that many known triangle enumeration algorithms (\cite{ItaiR78,CN85,BjorklundPWZ14,KPP15}) are essentially and conditionally optimal, up to subpolynomial factors. Hence, the offline version of triangle enumeration is well understood. The following two indexing versions of the triangle enumeration problem are a natural extension of the offline problem. In the \emph{edge-triangles} problem the goal is to preprocess a graph so that given a query edge $e$ all triangles containing $e$ are listed. The vertex-triangles problem is defined above. Clearly, both these versions solve the triangle enumeration problem, which immediately gives lower bounds conditioned on the \ThreeSUM{} conjecture.

It is fairly straightforward to see that the edge-triangles problem on a tripartite graph corresponds to preprocessing a family $F$ of sets over a universe $U$ in order to support set intersection queries in which given two sets $S,S'\in F$ the goal is to enumerate the elements in $S\cap S'$ (see~\cite{KPP14a}).
Indeed, the task of preprocessing $F$ to support set-intersection enumeration queries, and hence edge-triangles, is well studied~\cite{CP10,KPP15}. Furthermore, the set intersection problem has been used extensively as a tool for proving that many algorithmic problems are as hard as solving \ThreeSUM{}~\cite{Patrascu10,AW14,KPP14a}. However, the vertex-triangles problem has yet to be considered directly\footnote{The closely related problem of deciding whether a given vertex is contained by any triangle (a decision version) has been addressed~\cite{BW12}.}.

We use the vertex-triangles problem in order to show that the ISG problem is hard, and thus the simplest DMOG version of (offline) unbounded setting is \ThreeSUM{}-hard. The following theorems are simplified statements of the ones proven in Section~\ref{s:conditional-lower-bound}.

\begin{theorem}\label{thm:vertex_triangles_LB}
Assume \ThreeSUM{} requires $\Omega(n^{2-o(1)})$ expected time.
For any algorithm that solves the vertex-triangles problem on a graph $G$ with $m$ edges, if the amortized expected preprocessing time is $O(m\cdot \degen(G)^{1-\Omega(1)})$ and the amortized expected reporting time is sub-polynomial, then the amortized expected query time must be at least $\Omega((\hat d \cdot \degen(G))^{1-o(1)})$, where $\hat{d}$ is the degree of the queried vertex.
\end{theorem}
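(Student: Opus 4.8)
The plan is to compose two reductions: the \ThreeSUM{}-hardness of triangle \emph{listing} (\Patrascu{}~\cite{Patrascu10}; Kopelowitz--Pettie--Porat~\cite{KPP14a}) with the obvious reduction from triangle listing to the vertex-triangles problem. For the latter, to list every triangle of a graph $G$ one preprocesses $G$ as a vertex-triangles structure and then issues one query per vertex of $G$; each triangle is then reported exactly three times, so after discarding duplicates all triangles are listed. If the structure has preprocessing time $P$, amortized expected query time $T_q$, and amortized expected reporting time $R$, then listing all triangles of $G$ costs $P + T_q\cdot|V(G)| + 3t\cdot R$ in expectation, where $t$ is the number of triangles of $G$.

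First I would instantiate the \ThreeSUM{}-to-triangle-listing reduction so that the hard instance $G$ has a prescribed shape: $|V(G)| = \Theta(N)$ vertices; $m = \Theta(N\hat d)$ edges with essentially every vertex of degree $\Theta(\hat d)$; degeneracy $\degen(G)$ equal to a chosen target, enforced by planting a dense bipartite subgraph whose two sides have the right sizes so that it --- and not the rest of $G$ --- governs the degeneracy; and a number $t$ of triangles at the sweet spot (roughly $m^{1/2}\,\degen(G)^{3/2}$) where the \ThreeSUM{}-based listing bound meets the running time of the $O(m\cdot\mathrm{arboricity})$ listing algorithm. With these parameters the reductions of~\cite{Patrascu10,KPP14a} give that, assuming \ThreeSUM{} requires $\Omega(n^{2-o(1)})$ expected time, listing all triangles of $G$ requires $\Omega\big((m\cdot\degen(G))^{1-o(1)}\big)$ expected time --- the largest bound possible, since $O(m\cdot\mathrm{arboricity}) = O(m\cdot\degen(G))$ already suffices to list them all. (This is the regime $\hat d \geq \degen(G)$; when $\hat d < \degen(G)$ the brute-force $O(\hat d^2)$ per-query bound already beats $\hat d\cdot\degen(G)$, so nothing is claimed there.)

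Plugging the hypothesized structure into the listing bound gives
\[
P + T_q\cdot|V(G)| + 3t\cdot R \;\ge\; \Omega\!\left((m\cdot\degen(G))^{1-o(1)}\right).
\]
The assumed preprocessing bound $P = O(m\cdot\degen(G)^{1-\Omega(1)})$ is $o\big((m\degen(G))^{1-o(1)}\big)$, and the assumed subpolynomial reporting makes $3tR = t\cdot m^{o(1)} = o\big((m\degen(G))^{1-o(1)}\big)$ by the choice of $t$; hence $T_q\cdot|V(G)| = \Omega\big((m\degen(G))^{1-o(1)}\big)$. Dividing by $|V(G)| = \Theta(N)$ and substituting $m = \Theta(N\hat d)$ yields $T_q = \Omega\big((\hat d\cdot\degen(G))^{1-o(1)}\big)$. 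Since all but $o(N)$ of the queried vertices have degree $\Theta(\hat d)$ (the dense side of the planted subgraph is the only exception and has $\ll N$ vertices), this is exactly an amortized expected $\Omega\big((\hat d\cdot\degen(G))^{1-o(1)}\big)$ lower bound on the query time for a degree-$\hat d$ vertex. If one prefers not to assume the instance is near-regular, the same conclusion follows from a convexity bound: $\sum_u \big(d(u)\,\degen(G)\big)^{1-o(1)} \le |V(G)|^{o(1)}\,\big(2m\,\degen(G)\big)^{1-o(1)}$, so an algorithm answering \emph{every} degree-$d$ query in $o\big((d\,\degen(G))^{1-o(1)}\big)$ time would list all triangles in $o\big((m\,\degen(G))^{1-o(1)}\big)$ time, a contradiction. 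The passage to randomized/expected time is immediate because the \ThreeSUM{} conjecture is assumed against randomized algorithms, and the bounds are naturally amortized since one reduction instance produces $|V(G)|$ queries.

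The main obstacle I anticipate is the instantiation in the second paragraph: one must open up the \Patrascu{}/KPP reduction --- the group sizes, the hashing used to bucket the numbers, and the choice of which coordinate class is the ``small'' one --- and verify that its knobs can be set to realize the target $m$, the near-uniform degree $\hat d$, the target degeneracy, and the sweet-spot triangle count simultaneously, all without weakening the listing lower bound below $\Omega\big((m\,\degen(G))^{1-o(1)}\big)$. Everything downstream --- the trivial listing-via-queries reduction and the bookkeeping with $P$, $R$, and $t$ --- is routine.
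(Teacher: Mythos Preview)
Your core reduction --- preprocess $G$, query every vertex, and thereby list all triangles --- is exactly what the paper does. The difference is that you plan to open up the \Patrascu{}/KPP machinery and hand-tune the instance (separately controlling $\hat d$, $\degen(G)$, and the triangle count, even planting a dense bipartite piece to force the degeneracy), whereas the paper simply cites a black-box statement from~\cite{KPP14a} (restated as Theorem~\ref{thm:triangle_enumeration_lowerbound}): for any $0<x<1/2$ there are hard instances with $\Theta(m^{1-x})$ vertices and $\hat d = \degen(G) = \Theta(m^x)$ on which triangle enumeration needs $\Omega(m\cdot\degen(G)^{1-o(1)})$ time. With that in hand the proof is two lines: query all $m^{1-x}$ vertices; if preprocessing is $O(m\cdot\degen(G)^{1-\Omega(1)})$ and reporting is sub-polynomial, the per-query cost must be $\Omega(m\cdot\degen(G)^{1-o(1)}/m^{1-x}) = \Omega((m^x\degen(G))^{1-o(1)}) = \Omega((\hat d\cdot\degen(G))^{1-o(1)})$.

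So the ``main obstacle'' you anticipate --- re-engineering the reduction to realize independent values of $\hat d$ and $\degen(G)$, and to hit a sweet-spot triangle count --- never arises: the cited theorem already hands you instances with $\hat d = \degen(G)$, and that special case suffices for the claimed bound. Your plan would work but is considerably more than is needed; in particular the planted-subgraph idea is unnecessary (and would require care, since adding structure can alter the triangle count or the hardness). Your convexity remark for non-regular instances is a nice touch, but again unnecessary once one invokes the black-box form of the KPP theorem.
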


\begin{theorem}\label{thm:ISG_lb}
Assume \ThreeSUM{} requires $\Omega(n^{2-o(1)})$ expected time.
For any algorithm that solves the ISG problem on a graph $G$ with $m$ edges, if the amortized expected preprocessing time is $O(m\cdot \degen(G)^{1-\Omega(1)})$ and the amortized expected reporting time is sub-polynomial, then the amortized expected time spent on each vertex during a query must be at least $\Omega((\degen(G))^{1-o(1)})$.
\end{theorem}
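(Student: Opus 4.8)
The plan is to obtain Theorem~\ref{thm:ISG_lb} as a consequence of Theorem~\ref{thm:vertex_triangles_LB} by exhibiting an efficient reduction from the vertex-triangles problem to ISG: a sufficiently fast ISG algorithm would, through this reduction, solve vertex-triangles faster than Theorem~\ref{thm:vertex_triangles_LB} allows, and hence refute the \ThreeSUM{} conjecture. So the heart of the argument is the reduction, and then a parameter translation.

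First I would describe the reduction. Given an undirected instance $H$ of vertex-triangles and a query vertex $u$ of degree $\hat d$, construct a directed graph $G$ on vertex set $V(H)$ containing both arcs $x\to y$ and $y\to x$ for every edge $\{x,y\}\in E(H)$; this is done once, in $O(|E(H)|)$ time, as part of the ISG preprocessing. To answer the query on $u$, feed the $\hat d$ vertices of $N_H(u)$ in an arbitrary order $w_1,\dots,w_{\hat d}$ as the online sequence of ISG. When $w_i$ is processed, ISG reports exactly the arcs $(w_j,w_i)$ with $j<i$, i.e.\ the pairs $\{w_i,w_j\}\in E(H)$ with $j<i$; over the whole sequence this enumerates every edge of $H$ inside $N_H(u)$ exactly once, which is precisely the list of triangles of $H$ through $u$, each reported once. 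Thus one vertex-triangles query is served by a single ISG run with $|E(G)|=\Theta(|E(H)|)$ edges, output equal to the number of triangles through $u$, and total query cost equal to the per-vertex ISG cost summed over $\hat d$ vertices. Note that the online, reset-free nature of ISG causes no difficulty here, since $N_H(u)$ is fed only once and no spurious reports arise.

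The main obstacle, and the step requiring care, is the parameter bookkeeping needed to line up with Theorem~\ref{thm:vertex_triangles_LB}. One needs $|E(G)|=\Theta(|E(H)|)$ (immediate) and, crucially, $\degen(G)=\Theta(\degen(H))$: the underlying undirected graph of $G$ is $H$ with each edge of multiplicity two, and degeneracy is unchanged up to a constant under replacing each edge by $O(1)$ parallel copies — though one must pin down the convention used for the degeneracy of the directed graphs that appear in ISG. Now suppose, for contradiction, an ISG algorithm has amortized expected preprocessing $O(m\cdot\degen(G)^{1-\Omega(1)})$, subpolynomial amortized expected reporting, and amortized expected per-vertex query time $O(\degen(G)^{1-\epsilon})$ for some constant $\epsilon>0$. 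Composing with the reduction yields a vertex-triangles algorithm satisfying the preprocessing and reporting hypotheses of Theorem~\ref{thm:vertex_triangles_LB} (absorbing the $O(|E(H)|)$ construction cost into preprocessing) with query time $O(\hat d\cdot\degen(H)^{1-\epsilon})$. In the \ThreeSUM{}-hard instances underlying Theorem~\ref{thm:vertex_triangles_LB} one has $\hat d,\degen(H)=m^{\Theta(1)}$ (those instances are essentially regular), so $\hat d\cdot\degen(H)^{1-\epsilon}=o\bigl((\hat d\cdot\degen(H))^{1-o(1)}\bigr)$, contradicting Theorem~\ref{thm:vertex_triangles_LB}; letting $\epsilon\to 0$ gives the claimed $\Omega(\degen(G)^{1-o(1)})$ lower bound on the amortized expected time ISG spends per vertex. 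The two genuinely delicate points are therefore (i) the degeneracy-preservation check above, and (ii) the exponent arithmetic that converts the bundled bound $(\hat d\,\degen)^{1-o(1)}$ of Theorem~\ref{thm:vertex_triangles_LB} into a clean per-vertex bound, which relies on the density (near-regularity, polynomial degeneracy) of the hard instances and on $\hat d$ being at most polynomial in $m$.
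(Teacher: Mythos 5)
Your proposal is correct and follows essentially the same route as the paper: preprocess the vertex-triangles graph for ISG, feed the query vertex's neighborhood as the online stream so that reported edges correspond bijectively to triangles through the query vertex, and then divide the $\Omega((\hat d\cdot\degen)^{1-o(1)})$ bound of Theorem~\ref{thm:vertex_triangles_LB} across the $\hat d$ streamed vertices. The only difference is that you spell out the directed/undirected conversion and the exponent arithmetic on the hard instances, which the paper leaves implicit.
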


\begin{theorem}\label{thm:DMOG_lb}
Assume \ThreeSUM{} requires $\Omega(n^{2-o(1)})$ expected time.
For any algorithm that solves the DMOG problem on a graph $G$ with $m$ edges, if the amortized expected preprocessing time is $O(|D|\cdot \degen(G_D)^{1-\Omega(1)})$ and the amortized expected reporting time is sub-polynomial, then the amortized expected time spent on each text character must be at least $\Omega((\degen(G_D))^{1-o(1)})$.
\end{theorem}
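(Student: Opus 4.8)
The plan is to derive Theorem~\ref{thm:DMOG_lb} as an essentially immediate corollary of Theorem~\ref{thm:ISG_lb} by exhibiting ISG as a special case of online DMOG. Recall that Theorem~\ref{thm:ISG_lb} is already proved along the chain $\ThreeSUM{}\to$ triangle enumeration (via \Patrascu{} and Kopelowitz--Pettie--Porat) $\to$ vertex-triangles (a straightforward reduction) $\to$ ISG (the new reduction), so it encapsulates all the \ThreeSUM{}-hardness we need; what remains is a clean embedding of ISG into DMOG that preserves the relevant parameters. Given an ISG instance on a directed graph $G=(V,E)$, I would build a dictionary $D$ over the alphabet $\Sigma=V$ containing, for every edge $(u,v)\in E$, the gapped pattern $u\ \{0,\infty\}\ v$ (two single characters separated by an unbounded gap); thus $|D|=\Theta(m)$ and $d=|E|$. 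An online sequence of query vertices $v_1,v_2,\dots$ for ISG is handed to the DMOG algorithm as the text $T=v_1v_2\cdots$. When character $v_i$ arrives, the patterns of $D$ that are suffixes of $v_1\cdots v_i$ are exactly the patterns $v_j\ \{0,\infty\}\ v_i$ with $(v_j,v_i)\in E$ and some occurrence $j<i$, i.e.\ precisely the edges ISG must report after $v_i$; hence the per-character work of the DMOG algorithm equals the per-vertex work of the resulting ISG algorithm, and the output sizes agree up to a constant, so a sub-polynomial reporting overhead is preserved.

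The second step is to check that the structural parameter behaves as needed. The graph $G_D$ that captures $D$ is the bipartite graph whose left vertices are the first-subpattern characters (the sources of $G$), whose right vertices are the second-subpattern characters (the sinks of $G$), and whose edges are the patterns --- i.e.\ $G_D$ is the ``vertex-split'' bipartite version of $G$, and since all gaps equal $\{0,\infty\}$ it is a simple (non-multi) graph. I would argue $\degen(G_D)=\Theta(\degen(G))$: the direction $\degen(G_D)=O(\degen(G))$ follows by interleaving a degeneracy (elimination) ordering of the underlying undirected $G$ to obtain an ordering of $G_D$ in which every copy of a vertex has at most $\degen(G)$ later neighbours; the reverse inequality $\degen(G_D)=\Omega(\degen(G))$ is where I would lean on the particular shape of the instance produced by the vertex-triangles$\to$ISG reduction (alternatively, one may run the whole chain so that the ISG graph $G$ is already bipartite and balanced, making the two-sided bound immediate). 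Combining the two steps: a hypothetical online DMOG algorithm with amortized expected preprocessing $O(|D|\cdot\degen(G_D)^{1-\Omega(1)})$, sub-polynomial amortized expected reporting time, and amortized expected per-character cost $o\bigl(\degen(G_D)^{1-o(1)}\bigr)$ yields an ISG algorithm with preprocessing $O(m\cdot\degen(G)^{1-\Omega(1)})$, sub-polynomial reporting, and amortized expected per-vertex cost $o\bigl(\degen(G)^{1-o(1)}\bigr)$, contradicting Theorem~\ref{thm:ISG_lb} and hence refuting the \ThreeSUM{} conjecture.

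The main obstacle, as in the two preceding theorems, is not the combinatorial core but the parameter bookkeeping: ensuring that $\degen(G_D)$, $|D|$, $m$, the number of query vertices, and $|T|$ all line up (and in the right amortized/expected sense) so that the lower bound is genuinely stated \emph{per text character} against preprocessing measured in $|D|$, rather than being diluted by padding characters or by an imbalance between the two sides of $G_D$. The degeneracy-preservation claim is the delicate point, since the vertex-split operation can only \emph{decrease} degeneracy in general; I expect to resolve it by controlling the structure of the graph that reaches the ISG layer. Everything else --- the equivalence between ISG and the unbounded-gap, length-two-pattern special case of DMOG, and the preservation of the online, report-before-next-character semantics --- is routine.
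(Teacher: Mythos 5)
Your proposal is correct and follows essentially the same route as the paper: embed ISG into DMOG by mapping each edge $(u,v)$ of the ISG graph to the two-character gapped pattern $u\ \{0,\infty\}\ v$, so that the online vertex stream becomes the text, and then invoke Theorem~\ref{thm:ISG_lb}. The paper's own proof is literally the sentence ``Since ISG is a special case of DMOG, and given Theorem~\ref{thm:ISG_lb}, the proof of Theorem~\ref{thm:DMOG_lb} follows directly,'' so you have spelled out the details the paper left implicit. One small remark on the degeneracy bookkeeping you flagged: for the \emph{inequality} in the conclusion you only need $\degen(G_D)=O(\degen(G))$, which holds unconditionally because $G_D$ is a subgraph of the vertex split of $G$ and splitting cannot increase degeneracy; the per-character cost you inherit from Theorem~\ref{thm:ISG_lb} is $\Omega(\degen(G)^{1-o(1)})\ge\Omega(\degen(G_D)^{1-o(1)})$, and the preprocessing hypothesis $O(|D|\cdot\degen(G_D)^{1-\Omega(1)})\subseteq O(m\cdot\degen(G)^{1-\Omega(1)})$ passes through for the same reason. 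The reverse inequality $\degen(G_D)=\Omega(\degen(G))$ is only needed to make the statement non-vacuous (i.e., so that the hard family actually has large $\degen(G_D)$), and your suggested fix --- arranging the chain so that the ISG graph reaching the reduction is already bipartite and balanced, giving $G_D\cong G$ --- is exactly the device the paper uses explicitly in the bounded case (the tripartite construction in the proof of Theorem~\ref{thm:bounded_DMOG_lb}) and implicitly here.
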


\begin{theorem}\label{thm:bounded_DMOG_lb}
Assume \ThreeSUM{} requires $\Omega(n^{2-o(1)})$ expected time.
For any algorithm that solves the uniformly bounded DMOG problem on a graph $G$ with $m$ edges, if the amortized expected preprocessing time is $O(|D|\cdot \degen(G_D)^{1-\Omega(1)})$ and the amortized expected reporting time is sub-polynomial, then the amortized expected time spent on each text character must be at least $\Omega((\beta-\alpha)^{1-o(1)})$.
\end{theorem}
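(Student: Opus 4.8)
}
The plan is to show that the hard instance behind Theorem~\ref{thm:DMOG_lb} can be re-encoded with \emph{uniformly bounded} gaps, with the gap width $\beta-\alpha$ taking over the role that $\degen(G_D)$ plays there. Recall that in the proof of Theorem~\ref{thm:DMOG_lb} the dictionary consists of patterns that are pairs of single characters separated by an (unbounded) gap; via the vertex-triangles reduction the text is organized into consecutive \emph{query blocks}, where the block for a query vertex $u$ lists the neighbours of $u$ as a run of ``source'' characters followed by a run of ``target'' characters, and \ThreeSUM{}-hardness comes from the fact that when a target character arrives it must be checked against the $\Theta(\degen(G_D))$ source characters of the same block, while the total output of the instance is sub-polynomially small. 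Crucially, the graph $G_D$ produced by this reduction is bipartite, with source characters on one side and target characters on the other, and the unbounded gap is used \emph{only} so that a target can ``see'' the sources of its own block.

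First I would fix $g=\Theta(\degen(G_D))$ and replace every pattern's gap by $\{\alpha,\beta\}$ for a pair with $\beta-\alpha=g$, taking $\alpha=O(g)$ (e.g.\ $\alpha=0$) for the cleanest bounds; this leaves $D$, the graph $G_D$, and its degeneracy untouched, since the reserved symbols introduced next occur in no pattern and hence are not vertices of $G_D$. A pattern $a\,\{\alpha,\beta\}\,b$ now matches at a text position $p$ exactly when $b$ sits at $p$ and $a$ occurs among the $g+1$ positions $p-\beta-1,\dots,p-\alpha-1$. I would then rewrite $T$ so that each query block is laid out as a run of $\beta+1$ reserved separator symbols, then the source characters, then $\alpha$ reserved filler characters, then the target characters, arranging the source and target runs to have total length at most $g+1$. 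A short position calculation then shows that when any target character $b$ of a block is processed the bounded-gap look-back window covers exactly the source characters of that same block (plus inert reserved symbols) and never reaches the previous block, so each target reports precisely its intended matches; and because $G_D$ is bipartite, processing a source character triggers nothing (a source is never a second subpattern), so no spurious intra-block matches arise either. Thus the uniformly bounded DMOG algorithm reports, over all blocks, precisely the triangles of the underlying graph, while $|T|$ has grown only by a constant factor (using $\alpha=O(g)$), staying within the budget of the \ThreeSUM{}-to-triangle reduction of~\cite{KPP14a}.

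At this point the quantitative bookkeeping is identical to that of Theorem~\ref{thm:DMOG_lb}, now with $\beta-\alpha$ in place of $\degen(G_D)$: under amortized preprocessing $O(|D|\cdot\degen(G_D)^{1-\Omega(1)})$ and sub-polynomial amortized reporting, an amortized per-character cost of $o\big((\beta-\alpha)^{1-o(1)}\big)=o(g^{1-o(1)})$ would enumerate the triangles of the hard instance faster than the \ThreeSUM{} conjecture permits. I expect the main obstacle to be keeping $\beta-\alpha$ and $\degen(G_D)$ honestly coupled: because the preprocessing budget is charged against $\degen(G_D)^{1-\Omega(1)}$, the underlying hard triangle instance must be instantiated so that the bipartite $G_D$ has degeneracy $\Theta(\beta-\alpha)$ and this is genuinely what forces $\Theta(\beta-\alpha)$ work per target (with no slack hidden in an unnecessarily dense $G_D$), while its triangle count remains sub-polynomial. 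The remaining points are routine: checking that the run of $\beta+1$ reserved separators really isolates consecutive blocks, and that the $\alpha$ fillers realize exactly the offset needed for a target's window to reach all of its block's sources and nothing earlier; for the stated bound it suffices to take $\alpha=0$ (no fillers needed), and larger $\alpha$ is accommodated by the same $O(\alpha)$-filler trick whenever $\alpha$ is not asymptotically larger than $\beta-\alpha$.
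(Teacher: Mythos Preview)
Your approach is essentially the paper's: both reduce vertex-triangles to bounded ISG/DMOG by laying out, for each queried vertex $u$, its neighbours as a ``source'' run, then $\alpha$ inert fillers, then a ``target'' run, so that the $\{\alpha,\beta\}$ window of any target covers exactly the sources of the same block; both then set $\beta-\alpha=\Theta(\hat d)=\Theta(\degen(G))$ and invoke Theorem~\ref{thm:triangle_enumeration_lowerbound}. The paper additionally places $\alpha$ dummy characters between the runs (as you do) but does not explicitly insert inter-block separators; your $\beta{+}1$-separator padding makes that isolation explicit and is a nice clarification.

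There is one point where your argument is looser than the paper's. You assert that ``a source is never a second subpattern'' because $G_D$ is bipartite, but in the plain reduction from Theorem~\ref{thm:ISG_lb} the \emph{same character} $c_v$ can be both a first and a second subpattern (whenever $v$ has both outgoing and incoming edges), so a character appearing in your source run does act as a target, and characters inside the target run can match against earlier targets. The paper handles this by first passing to a tripartite copy $G_T$ of $G$ (three vertex classes, six edges per original edge, no intra-class edges), and then listing only $V_2$-neighbours as sources and $V_3$-neighbours as targets; this makes source and target alphabets disjoint, so your claimed property holds. Without that step your construction still ``works'' in the sense that every reported match is a genuine triangle (possibly reported a bounded number of times), so the output stays within a constant factor and the lower bound goes through, but the justification you gave for why no spurious intra-run matches arise is not correct as stated. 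Either insert the tripartite (or bipartite double-cover) step explicitly, or replace that sentence with the observation that intra-run matches are duplicates rather than false positives.
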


\paragraph{A Note on Triangle Reporting Problems and Other Popular Conjectures.}
Many \clb{}s based on other popular conjectures, such as the Boolean Matrix Multiplication conjecture or the Online Matrix Vector Multiplication conjecture, use reductions from set-disjointness and hence from edge-triangles queries (see~\cite{AW14,HKNS15}). However, it is not clear how to obtain meaningful lower bounds for vertex-triangles queries based on these conjectures. These difficulties are discussed in Appendix~\ref{s:app_other_clbs}.

\subsection{The Upper Bound Story}\label{ss:UB_story}
Given Theorems~\ref{thm:vertex_triangles_LB},~\ref{thm:ISG_lb}, and~\ref{thm:DMOG_lb}, we focus on providing matching upper bounds for the online versions of the vertex-triangles problem, ISG, and DMOG. A table summarizing our upper-bounds for DMOG appears in Appendix~\ref{app:upper-bound-table}. In Section~\ref{s:induced}, we provide a solution for ISG using $O(m)$ preprocessing time and $O(\degen(G) + \op)$ query time using the graph orientation technique, which is closely related to graph degeneracy. This matches the \clb{} from Theorem~\ref{thm:ISG_lb}, and by applying the reduction from vertex-triangles queries to ISG, it also matches the \clb{} from Theorem~\ref{thm:vertex_triangles_LB}.

Our ISG algorithms are then extended to versions corresponding to the simplified versions of the uniformly and non-uniformly bounded DMOG problems. This is shown in Sections~\ref{ss:uniformly-bounded-ISG} and~\ref{ss:non-uniformly-bounded-ISG}. The time bounds, ignoring poly-log factors, remain the same, however, the space usage is slightly increased. Interestingly, for the non-uniform case we utilize 4-sided 2-dimensional orthogonal range reporting queries in a clever way.

In Section~\ref{s:DMOGinduced}, the ISG algorithms are extended to solutions for the various DMOG versions. However, the longer subpatterns introduce new challenges that need to be tackled. First, since subpatterns can be suffixes of each other, up to $\lsc$ vertices can arrive simultaneously in $G_D$, which is the backbone of our algorithms. Thus, the time of our algorithms have a multiplicative factor of $\lsc$. We emphasize that we are not the first to introduce the $\lsc$ factor in solutions for DMOG problems~\cite{hsss:11}. Second, since subpatterns may be long, we must accommodate a delay in the time a vertex corresponding to a second subpattern is treated as if it has arrived, thus inducing a minor additive space usage.

Finally, the online DMOG algorithms in Section~\ref{s:DMOGinduced} have a cost per character of $O(\lsc \cdot \degen(G_D)+\op)$. In Section~\ref{s:threshold} we show that it is possible to obtain more efficient bounds that depend linearly on $\sqrt{\lsc\cdot d}$ when $\degen(D_G) \geq \sqrt{\frac d {\lsc}}$, by first considering special types of graph orientations, called \emph{threshold} orientations, and then carefully applying data-structure techniques. Notice that while in the uniformly bounded case we have $\degen(G_D) = O(\sqrt d)$, in the non-uniform case $\degen(G_D)$ could be much higher and so these new algorithms become a vast improvement. However these algorithms are mostly of interest from a theoretical perspective since in practice both the degeneracy and $\lsc$ are very small.

\section{\ThreeSUM{:} Conditional Lower Bounds}\label{s:conditional-lower-bound}
In this section we prove that conditioned on the \ThreeSUM{} conjecture we can prove lower bounds for the vertex-triangles problem, the ISG problem, and the (offline) unbounded DMOG problem. Since the other two versions of DMOG (uniformly and non-uniformly bounded) can solve the unbounded DMOG version, the lower bounds hold for those problems as well.
We use the following theorem proven by Kopelowitz, Pettie, and Porat~\cite{KPP14a}.

\begin{theorem}[\cite{KPP14a}]\label{thm:triangle_enumeration_lowerbound}
Assume \ThreeSUM{} requires $\Omega(n^{2-o(1)})$ expected time.
Then for any constant $0<x<1/2$, any algorithm for enumerating all triangles in a graph $G$ with $m$ edges, $\Theta(m^{1-x})$ vertices, and $\hat{d} = \degen(G) = \Theta(m^x)$, where $\hat{d}$ is the average degree of a vertex in $G$, must spend $\Omega(m\cdot \degen(G)^{1-o(1)})$ expected time.
\end{theorem}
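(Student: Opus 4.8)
The plan is to reduce \ThreeSUM{} to triangle enumeration on graphs of precisely this shape, so that an enumeration algorithm beating the claimed bound would yield a truly subquadratic \ThreeSUM{} algorithm. I would first normalize the \ThreeSUM{} instance: reduce it to the three-array form (find $a\in A$, $b\in B$, $c\in C$ with $a+b=c$, each set of size $n$), then to numbers in a universe of size $n^{O(1)}$ via \Patrascu{}'s hashing self-reduction, and finally, by a standard preliminary partitioning/sampling step, to an instance with only $\tilde O(n)$ solutions (equivalently, work with the decision version). These steps cost $\tilde O(n)$ and preserve the conditional $\Omega(n^{2-o(1)})$ lower bound.

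Next I would hash and build the graph. Fix a parameter $R$ (to be tuned as a function of $n$ and $x$) and draw a random \emph{almost-linear} hash $h\colon[n^{O(1)}]\to[R]$ from Dietzfelbinger's family, using its two key properties: every bucket receives $\tilde O(n/R)$ elements with high probability, and if $a+b=c$ then $h(c)$ is one of $O(1)$ values determined additively from $h(a)$ and $h(b)$. Partition $A,B,C$ into $R$ buckets each. Form a tripartite graph $G$ whose three vertex classes are the (possibly split) buckets, putting an edge between two buckets exactly when they are hash-consistent with a potential solution --- using almost-linearity to select the $O(1)$ candidate third-class buckets for each pair --- so that every genuine solution $a+b=c$ induces a triangle. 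Then tune $R$ (with a two-level bucketing when $x$ is small, so graph vertices can be finer than hash buckets) so that $G$ has $m$ edges with $|V|=\Theta(m^{1-x})$ and average degree $\Theta(m^x)$; since the construction yields a graph no subgraph of which is much denser than average, $\degen(G)=\Theta(m^x)$ as well. Balancing these constraints pins down $m=\tilde\Theta(n^{2/(1+x)})$, so that $m\cdot\degen(G)=\tilde\Theta(m^{1+x})=\tilde\Theta(n^2)$.

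The triangles of $G$ split into genuine ones --- at most $\tilde O(n)$ by the normalization --- and \emph{spurious} ones, i.e.\ hash-consistent bucket triples containing no actual solution. A spurious triangle is witnessed by a quadruple $(a,a',b,b')$ with $a-a'=b-b'\neq 0$ and $a+b'\in C$ for which $h(a)=h(a')$ and $h(b)=h(b')$; by near-pairwise-independence of $h$ this coincidence has probability $\tilde O(1/R^2)$, so in expectation $G$ has $\tilde O(n^3/R^2)=n^{2-\Omega(1)}$ spurious triangles, which is polynomially below $n^2$. Now, given a \ThreeSUM{} instance, I would build $G$ in $\tilde O(m)=o(n^2)$ time, run the hypothetical enumeration algorithm, and test each reported triangle in $O(1)$ time for an actual solution. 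If the algorithm ran in $O(m\cdot\degen(G)^{1-\Omega(1)})=O(m^{1+x-\Omega(1)})$ on these instances, then since $m^{1+x}=\tilde\Theta(n^2)$ the enumeration, the build, and the checking would all cost $n^{2-\Omega(1)}$, solving \ThreeSUM{} in truly subquadratic expected time --- a contradiction. Hence every enumeration algorithm spends $\Omega(m\cdot\degen(G)^{1-o(1)})$ expected time.

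The main obstacle lies in the graph construction and the spurious-triangle bound taken together: designing $G$ so that \emph{simultaneously} every solution becomes a triangle, the three parameters $m$, $|V|=\Theta(m^{1-x})$ and $\degen(G)=\hat d=\Theta(m^x)$ come out exactly right across the whole range $0<x<1/2$, and the expected number of spurious triangles stays $n^{2-\Omega(1)}$. This is precisely where almost-linearity and the load/independence guarantees of the hash family are essential, and where the bucketing must become two-level in the regime where $m^{1-x}$ would otherwise exceed $n$.
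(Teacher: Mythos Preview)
The paper does not prove this theorem at all: it is quoted verbatim as a black-box result of Kopelowitz, Pettie, and Porat~\cite{KPP14a}, introduced by the sentence ``We use the following theorem proven by Kopelowitz, Pettie, and Porat,'' and then immediately applied to derive Theorems~\ref{thm:vertex_triangles_LB}--\ref{thm:bounded_DMOG_lb}. So there is nothing in the present paper to compare your attempt against.

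That said, your proposal is a faithful outline of the actual KPP reduction: normalize \ThreeSUM{}, pass through an almost-linear hash into $R$ buckets, build the tripartite bucket graph, tune $R$ so that $m^{1+x}=\tilde\Theta(n^2)$, and argue that the number of triangles (the ``pseudo-solutions'') is $n^{2-\Omega(1)}$ so that enumeration plus per-triangle checking solves \ThreeSUM{}. Two small cautions if you intend to flesh this out. First, the standard route in~\cite{KPP14a} goes via \emph{Convolution-\ThreeSUM{}} rather than directly bounding solutions to $\tilde O(n)$; your ``reduce to an instance with $\tilde O(n)$ solutions'' step is not a standard self-reduction and would need justification, whereas the Convolution-\ThreeSUM{} detour gives the pseudo-solution bound cleanly. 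Second, your quadruple characterization of spurious triangles and the $\tilde O(n^3/R^2)$ count are morally right but the actual accounting in~\cite{KPP14a} is phrased differently (in terms of collisions under the almost-linear hash and the structure of Convolution-\ThreeSUM{}); getting the exponent to work across the full range $0<x<1/2$, including the two-level bucketing you allude to, is exactly where the care lies.
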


\begin{proof}[Proof of Theorem~\ref{thm:vertex_triangles_LB}]
We reduce the triangle enumeration problem considered in Theorem~\ref{thm:triangle_enumeration_lowerbound} to the vertex-triangles problem. We preprocessing $G$ and then answer vertex-triangles queries on each of the $m^{1-x}$ vertices thereby enumerating all of the triangles in $G$. If we assume a sub-polynomial reporting time, then by Theorem~\ref{thm:triangle_enumeration_lowerbound} either the preprocessing takes $\Omega(m\cdot \degen(G)^{1-o(1)})$ time or each query must cost at least $\Omega(\frac{m\cdot \degen(G)^{1-o(1)}}{m^{1-x}}) = \Omega((m^{x}\degen(G))^{1 -o(1)} ) = \Omega((\hat d \cdot \degen(G))^{1-o(1)})$ time.
\end{proof}

\begin{proof}[Proof of Theorem~\ref{thm:ISG_lb} and Theorem~\ref{thm:DMOG_lb}]
We reduce the vertex-triangles problem considered in Theorem~\ref{thm:vertex_triangles_LB} to ISG as follows. We preprocess the graph $G$ for ISG queries. Now, when we want to answer a vertex-triangle query on some vertex $u$, we input all of the neighbors of $u$ into the ISG algorithm. Thus, there is a one-to-one correspondence between the edges reported by the ISG algorithm and the triangles in the output of the vertex-triangles query. Since each vertex-triangle query must cost $\Omega(\hat d \cdot \degen(G)^{1-o(1)})$ amortized expected time then the amortized expected time spent for each of the $\hat{d}$ neighbors of $u$ must be at least $\Omega(\degen(G)^{1-o(1)})$ amortized expected time. Since ISG is a special case of DMOG, and given Theorem~\ref{thm:ISG_lb}, the proof of Theorem~\ref{thm:DMOG_lb} follows directly.
\end{proof}

\begin{proof}[Proof of Theorem~\ref{thm:bounded_DMOG_lb}]
The proof is similar to the proofs of Theorems~~\ref{thm:ISG_lb} and~\ref{thm:DMOG_lb}.
First, we convert the input graph $G$ of the vertex-triangles problem
to a tripartite graph $G_T$ by creating three copies of the vertices $V_1,V_2,V_3$ and for each edge $(u,v)$ in $G$ we add $6$ edges to $G_T$ between all possible copies of $u$ and $v$. We also add a dummy vertex to $G_T$ with degree 0.
Each triangle in $G$ corresponds to a constant number of triangles in $G_T$. Let $\alpha$ be any positive integer and let $\beta = \alpha + 2\hat{d}$. We use ISG to solve vertex-triangles queries in Theorem~\ref{thm:ISG_lb}, but we only ask queries on the neighbors of vertices in $V_1$ in a specially tailored way as follows. We first list the neighbors of $u$ from $V_2$, followed by $\alpha$ copies of the dummy vertex, and then list the neighbors from $V_3$. From the construction of the tripartite graph and the input to the ISG algorithm, two vertices of an edge that is part of the output of the ISG algorithm must be separated in the input list by at least $\alpha$ vertices, and by at most the length of the list which is $\beta$. Thus, the time spent on each vertex must be at least $\Omega(\degen(G)^{1-o(1)}) = \Omega((m^x)^{1-o(1)}) = \Omega((\beta-\alpha)^{1-o(1)})$ amortized expected time. Continuing the reduction to the bounded DMOG problem completes the proof.
\end{proof}

\section{The Induced Subgraph Problem}\label{s:induced}
\noindent \paragraph {\bf An Upper Bound via Graph Orientations.} We make use of graph orientations, where the goal is to \emph{orient} the graph edges while providing some guarantee on the out-degrees of the vertices. Formally, an orientation of an undirected graph $G = (V,E)$ is called a {\em $c$-orientation} if every vertex has out-degree at most $c\ge1$. The notion of graph \emph{degeneracy} is closely related to graph orientations~\cite{AYZ95}. There is a simple linear time greedy algorithm by Chiba and Nishizeki~\cite{CN85} that assigns a $\degen(G)$-orientation of $G$. We use graph orientations for solving ISG problem as follows. First, we view a $c$-orientation as assigning ``responsibility'' for all data transfers occurring on an edge to one of its endpoints, depending on the direction of the edge in the orientation (regardless of the actual direction of the edge in the input graph $G$). We exploit this distinction by using the notation of an edge $e=(u,v)$ as oriented from $u$ to $v$, while $e$ could be directed either from $u$ to $v$ or from $v$ to $u$. We say that $u$ is \emph{responsible} for $e$, and that $e$ is \emph{assigned} to $u$. Furthermore, $u$ is a \emph{responsible-neighbor} of $v$ and $v$ is an \emph{assigned-neighbor} of $u$. Notice that in a $c$-orientation the number of assigned-neighbors of any vertex is at most $c$, while the number of responsible-neighbors could be much larger than $c$.

\paragraph{The Bipartite Graph.} We begin by converting $G=(V,E)$ to a bipartite graph by creating two copies of $V$ called $L$ (the left vertices) and $R$ (the right vertices). For every edge $(u,v)\in E$ we add an edge in the bipartite graph from $u_L\in L$ to $v_R\in R$, where $u_L$ is a copy of $u$ and $v_R$ is a copy of $v$ . All of the edges are directed from $L$ to $R$. Furthermore, each vertex in $V$ that arrives during query time is replaced by its two copies, first the copy from $R$ and then the copy from $L$. This ordering guarantees that a self loop in $G$ is not mistakenly reported the first time its single vertex arrives. Notice that the degeneracy of $G$ is unchanged, up to constant factors, due to this reduction. From here onwards we assume that $G$ is already in this bipartite representation.

\paragraph{The Data Structure.} We preprocess $G$ using the algorithm of~\cite{CN85}, thereby obtaining a $c$-orientation with $c=\degen(G)$. Each vertex $v\in R$ maintains a \emph{reporting list} $\rl_v$, which is a linked list containing its responsible-neighbors that have already appeared during the current query. When a vertex $v\in R$ arrives during query time, the elements in $\rl_v$ correspond to (some of the) edges that should be reported. Thus, reporting each such edge takes constant time via a linear scanning of $\rl_v$. The remaining edges to be reported are all assigned to $u$, so scanning the edges assigned to $u$ in $O(\degen(G))$ time suffices for listing them. When a vertex $u\in L$ arrives, $u$ is marked as arrived and we scan its assigned-neighbors, adding $u$ to their reporting lists. This also takes $O(\degen(G))$ time. Thus, we have proven Theorem~\ref{thm:induced_subgraph_algo}.
\begin{theorem}\label{thm:induced_subgraph_algo}
The ISG problem on a graph $G$ with $m$ edges and $n$ vertices can be solved online with $O(m+n)$ preprocessing time, $O(\degen(G) + op)$ time per query vertex, where $op$ is the number of edges reported at vertex arrival, and $O(m)$ space.
\end{theorem}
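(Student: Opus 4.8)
The plan is to formalize the data structure and query procedure sketched in the paragraphs above and then verify three things in turn: the preprocessing cost, the per-arrival cost, and correctness (that each edge is reported at precisely the arrivals demanded by the definition of ISG). The first step is to set up the bipartite representation $L\cup R$ of $G$ in $O(m+n)$ time; this leaves the edge count unchanged and, as already noted, changes $\degen(G)$ by at most a constant factor, so from now on $c=\degen(G)$ refers to the bipartite graph. I would then run the Chiba--Nishizeki greedy peeling~\cite{CN85} in $O(m+n)$ time to obtain a $c$-orientation, and store for every vertex the (at most $c$) edges assigned to it together with the corresponding assigned-neighbors, and initialize an empty reporting list $\rl_v$ for every $v\in R$ and an ``arrived'' bit for every $u\in L$. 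Restricting attention to vertices incident to at least one edge, all of these structures occupy $O(m)$ space.

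Next I would spell out the query procedure, recalling that an original vertex is processed online as its $R$-copy followed by its $L$-copy. On the arrival of $u\in L$: if its arrived bit is not yet set, set it and append $u$ to $\rl_w$ for each of its at most $c$ assigned-neighbors $w$. On the arrival of $v\in R$: report $(u,v)$ for every $u$ encountered while linearly scanning $\rl_v$, and report $(w,v)$ for every assigned-neighbor $w$ of $v$ whose arrived bit is set, found while scanning the at most $c$ edges assigned to $v$. The only work not charged directly to output is the scan of assigned edges and the bit updates, which is $O(c)=O(\degen(G))$ per arrival; since every element of $\rl_v$ and every flagged assigned-neighbor of $v$ yields a distinct reported edge, the rest is $O(op)$, giving the claimed $O(\degen(G)+op)$ per arrival and $O(m+n)$ preprocessing.

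The remaining and most delicate step is correctness, which must rule out both missed reports and duplicated reports at a given arrival. Fix an edge $e=(u,v)\in E$, which in the bipartite graph is the edge $u_L\to v_R$, and recall that the orientation assigns $e$ to exactly one of its two endpoints. If $e$ is assigned to $u$, then the first processing of $u_L$ appends $u$ to $\rl_v$, and thereafter every processing of $v_R$ scans $\rl_v$ and reports $e$; if $e$ is assigned to $v$, then every processing of $v_R$ scans the edges assigned to $v$, finds $u$, and reports $e$ exactly when $u_L$'s arrived bit is already set. In both cases $e$ is reported at an occurrence of $v_R$ if and only if some occurrence of $u_L$ strictly precedes it, which is exactly the ISG requirement; and since $e$ is handled by only one of the two mechanisms, it is never reported twice at a single arrival. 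The convention of processing the $R$-copy before the $L$-copy of the same original vertex is precisely what prevents a self-loop -- or an edge between two copies of one original vertex occurring at the same position -- from being reported prematurely.

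I expect the main obstacle to be nailing down this ordering/no-double-counting bookkeeping cleanly (especially the interaction between the $R$-before-$L$ convention and repeated vertex arrivals), rather than the running-time analysis, which is then routine. Once correctness is established in this form, Theorem~\ref{thm:induced_subgraph_algo} follows immediately, and, via the reduction from the vertex-triangles problem to ISG given in Section~\ref{s:conditional-lower-bound}, it also yields a matching upper bound for vertex-triangles queries.
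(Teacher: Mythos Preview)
Your proposal is correct and follows essentially the same approach as the paper: bipartite duplication, a $\degen(G)$-orientation via Chiba--Nishizeki, reporting lists $\rl_v$ for $v\in R$, and the two-sided scan (of $\rl_v$ and of $v$'s assigned edges) at each $R$-arrival. Your write-up is in fact more careful than the paper's on two points---the explicit ``arrived'' bit preventing duplicate insertions into $\rl_v$ under repeated arrivals, and the explicit correctness case-split on which endpoint is responsible for an edge---but these are elaborations of the same argument rather than a different route.
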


\subsection{Uniformly Bounded Edge Occurrences}\label{ss:uniformly-bounded-ISG}
In this case, the ISG problem is restricted with two positive integer parameters $\alpha$ and $\beta$ so that an edge $(v_j,v_i)$ can only be reported if $\alpha\leq i-j \leq\beta $. We still assume that the graph is bipartite, as before. The time window between $\beta$ time units ago and $\alpha$ time units ago is called the \emph{active window}. A reporting list $\rl_v$ for a vertex $v\in R$ contains the responsible-neighbors of $v$ which have appeared during the active window, without repetition. Since each responsible neighbor appears only once, the space consumption of all of these lists is $O(m)$. Furthermore, each vertex $u\in L$ maintains an ordered list of time stamps $\tau_u$ of the times $u$ appeared in the current active window.

Notice that the arrival of a vertex from $R$ should be treated immediately, while the treatment of the arrival of a vertex from $L$ should be delayed by $\alpha$ time units. Second, an algorithmic mechanism is needed for ``forgetting'' the arrival of a vertex after $\beta$ time units. These are addressed by maintaining a list of the last $\beta$ vertices that have arrived during query time. The complete proof of Theorem~\ref{thm:induced_subgraph_unifrom_bound_algo} is given in Appendix~\ref{app:induced}.

\begin{theorem}\label{thm:induced_subgraph_unifrom_bound_algo}
The Induced Subgraph problem with uniformly bounded edge occurrences on a graph $G$ with $m$ edges and $n$ vertices can be solved with $O(m+n)$ preprocessing time, $O(\degen(G) + op)$ time per query vertex, where $op$ is the number of edges reported at vertex arrival, and $O(m+ \beta)$ space.
\end{theorem}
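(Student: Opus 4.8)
The plan is to adapt the orientation-based data structure behind Theorem~\ref{thm:induced_subgraph_algo} so that a left vertex's contribution to the reporting lists is switched \emph{on} with a delay of $\alpha$ arrivals and switched \emph{off} after $\beta$ arrivals, while still storing every responsible-neighbor at most once in every reporting list. As in the unbounded case I would preprocess the bipartite graph with the algorithm of~\cite{CN85} in $O(m+n)$ time to obtain a $\degen(G)$-orientation, which fixes for every vertex its set of at most $\degen(G)$ assigned-neighbors and, dually, for every left vertex $u$ the list of right vertices $v$ for which $u$ is a responsible-neighbor of $v$. In addition I maintain a single time counter, a circular buffer $Q$ of size $\beta+1$ recording which vertex arrived at each of the last $\beta+1$ times (lazily initialized so preprocessing stays $O(m+n)$), and for every left vertex $u$ a FIFO list $\tau_u$ that holds exactly those arrival times $j$ of $u$ with $\alpha\le t-j\le\beta$ at the current time $t$ — i.e.\ the ``active'' appearances of $u$. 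Each original vertex is processed, exactly as in the unbounded case, as its right copy followed by its left copy, and I record a left-copy appearance only after its right copy has been handled; this enforces the $j<i$ requirement and prevents a self-loop from being reported at the arrival creating its only endpoint.

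The per-step work at time $t$ would be: (i) \emph{expiration} — if $Q$ shows that a vertex $u$ arrived at time $t-\beta$, pop the front of $\tau_u$ (which is precisely that appearance), and if $\tau_u$ becomes empty, delete $u$ from $\rl_{v}$ for each of $u$'s $\le\degen(G)$ assigned-neighbors $v$, using pointers to the relevant list nodes stored in an array owned by $u$; (ii) \emph{activation} — if $Q$ shows a vertex $u$ arrived at time $t-\alpha$, push $t-\alpha$ onto $\tau_u$, and if $\tau_u$ was empty insert $u$ into $\rl_{v}$ for each of $u$'s assigned-neighbors $v$, recording the new nodes in $u$'s array; (iii) if the newly arrived vertex is a right vertex $v$, first scan $\rl_{v}$ and report the corresponding edge for each entry, then scan the $\le\degen(G)$ edges assigned to $v$ and report such an edge $(v,u)$ iff $\tau_u\neq\emptyset$; (iv) write $v$ into $Q$ at position $t$ and, for a left copy, the appearance is now visible. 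Correctness rests on the invariant that at the start of step $t$, $\rl_{v}$ contains a responsible-neighbor $u$ of $v$ iff $u$ has an active appearance, i.e.\ some appearance $j$ with $\alpha\le t-j\le\beta$; this is maintained because the active-appearance set of $u$ changes only at the times $j+\alpha$ and $j+\beta$, and all $\le\degen(G)$ reporting lists that $u$ touches change in lockstep, driven solely by whether $\tau_u$ is empty. By the invariant, the test in (iii) for a responsible edge $v\to u$ reduces to ``$\tau_u\neq\emptyset$'', and the scan of $\rl_{v}$ reports exactly the edges of $v$ whose left endpoint is responsible — each a distinct output edge.

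For the resource bounds: each step triggers exactly one expiration event and one activation event, and each costs $O(\degen(G))$ in the worst case (it updates at most $\degen(G)$ reporting lists, or only $O(1)$ if there is no emptiness transition); the right-vertex processing costs $O(\degen(G)+op)$. Hence the time per query vertex is $O(\degen(G)+op)$ with no amortization needed. For space, $|Q|=O(\beta)$; since at most $\beta$ arrivals lie within the last $\beta$ steps we have $\sum_u|\tau_u|=O(\beta)$; the reporting lists together store each (responsible-neighbor, right-vertex) pair at most once and so use $O(m)$ space, and the node-pointer arrays owned by the vertices also sum to $O(m)$ (their sizes are the orientation out-degrees). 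Together with the orientation this is $O(m+\beta)$ (absorbing the $O(n)$ term, or writing $O(m+n+\beta)$ if isolated vertices are permitted).

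I expect the main obstacle to be getting the two timing mechanisms — the $\alpha$-delay before a left vertex begins contributing and its removal after $\beta$ steps — to cooperate with the no-repetition requirement without inflating either the per-step time or the space; a naive ``insert on every appearance, delete $\beta$ steps later'' scheme would both violate the $O(m)$ space bound on the reporting lists and turn the per-step cost amortized. The key observation that resolves this is that all $\le\degen(G)$ reporting lists in which a given left vertex $u$ can appear are governed by the single predicate ``$\tau_u=\emptyset$'', so repeated appearances of $u$ merely lengthen $\tau_u$ and cause no reporting-list update at all; consequently each step performs only $O(1)$ structural events, each of cost $O(\degen(G))$, yielding the stated worst-case bounds.
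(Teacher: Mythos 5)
Your approach is essentially the same as the paper's: a $\degen(G)$-orientation via Chiba–Nishizeki, per-left-vertex timestamp lists $\tau_u$, reporting lists updated only on emptiness transitions of $\tau_u$, and a size-$O(\beta)$ buffer of recent arrivals to drive delayed activation (at lag $\alpha$) and expiration (at lag $\beta$). Two small slips worth fixing, though neither changes the structure of the argument. First, an off-by-one: at time $t$ the arrival at $t-\beta$ still satisfies $\alpha\le t-(t-\beta)=\beta\le\beta$ and so must remain active through the current step; the paper expires the arrival from time $t-\beta-1$, whereas your step (i) expires $t-\beta$ \emph{before} reporting, which drops the boundary case. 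Second, your step (iii) reports ``the corresponding edge for each entry'' of $\rl_v$ and ``iff $\tau_u\neq\emptyset$'' for the assigned side, i.e.\ once per responsible/assigned neighbor, whereas the intended output (and what the paper does by ``reporting the edges corresponding to their time stamp lists'') is once per active appearance in $\tau_u$; since you already maintain $\tau_u$, this is just a matter of iterating it rather than testing its emptiness, and the cost is still charged to $op$.
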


\subsection{Non-Uniformly Bounded Edge Occurrences}\label{ss:non-uniformly-bounded-ISG}
We now consider the generalization to non-uniformly bounded edge occurrences, where each edge $e=(v_j,v_i)$ has its own boundaries $[\alpha_e, \beta_e]$ and can only be reported if $\alpha_e\leq i-j \leq\beta_e$. Notice that in this case the input graph is a multi-graph. The active window for this ISG version is the time window between $\beta^* = \max_{e\in E} \{\beta_e\}$ and $\alpha^* = \min_{e\in E} \{\alpha_e\}$ time units ago.

Similar to Section~\ref{ss:uniformly-bounded-ISG}, a list of the last $\beta^*$ vertices that have appeared is maintained so that a vertex can be removed from our data structure after $\beta^*$ units of time.

\section{DMOG via Graph Orientation}\label{s:DMOGinduced}

When extending ISG to online DMOG, the longer subpatterns introduce new challenges that need to be addressed. It is helpful to still consider the bipartite graph presentation of the DMOG instance, where vertices correspond to subpatterns and edges correspond to patterns. We use the algorithms from Section~\ref{s:induced} as basic building blocks in our algorithms for DMOG by treating a subpattern arriving as the vertex arriving in the appropriate graph. However, we now need to address the difficulties that arise from subpatterns being arbitrarily long strings.

To start off, we need a mechanism for determining when a subpattern arrives. One way of doing this is by using the the Aho-Corasick (AC) Automaton~\cite{AC75}, using a standard binary encoding technique so that each character costs $O(\log |\Sigma|)$ worst-case time. For simplicity we assume that $|\Sigma|$ is constant.
However, while in the ISG problem each character corresponds to the arrival of at most one subpattern, in the DMOG with unbounded gaps each arriving character may correspond to several subpatterns which all arrive at once, since a subpattern could be a proper suffix of another subpattern.
To address this issue we phrase the complexities of our algorithms in terms of $\lsc$ which is the maximum number of vertices in the bipartite graph that arrive due to a character arrival. This induces a multiplicative overhead of at most $\lsc$ in the query time per text character relative to the time used by the ISG algorithms.

Finally, there is an issue arising from subpatterns no longer being of length one, which for simplicity we first discuss this in the unbounded case. When $u\in L$ arrives and it has an assigned vertex $v\in R$ where $m_v$ is the length of the subpattern associated with $v$, then we do not want to report the edge $(u,v)$ until at least $m_v-1$ time units have passed, since the appearance of the subpattern of $v$ should not overlap with the appearance of the subpattern of $u$. Similarly, in the bounded case, we must delay the removal of $u$ from $\rl_v$ by at least $m_v-1$ time units. Notice that if we remove $u$ from $\rl_{v}$ after a delay of $m_v-1$, then we may be forced to remove a large number of such vertices at a given time, which may be unaffordable within our time bounds.
Our solution is to delay the removal of $u$ by $M-1$ time units, where $M$ is the length of the longest subpattern that corresponds to a vertex in $R$. This solves the issue of synchronization, however now there is a problem of some of the reporting lists being elements that should not be reported. Nevertheless, it is straightforward to spend time in a reporting list that corresponds to the size of the output using standard list and pointer techniques.
Combining these ideas with the algorithms in Section~\ref{s:induced} gives Theorems~\ref{thm:DMOG_unbounded_algo},~\ref{thm:DMOG_nomult_uniform_bound_algo} and~\ref{thm:DMOG_non_uniform_bound_algo}.

\begin{theorem}\label{thm:DMOG_unbounded_algo}
The DMOG problem with one gap and unbounded gap borders can be solved with $O(|D|)$ preprocessing time, $O(\degen(G_D)\cdot \lsc + op)$ time per text character, where $op$ is the number of patterns that are reported due to the character arriving, and $O(|D|)$ space.
\end{theorem}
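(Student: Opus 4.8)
\textbf{Proof proposal for Theorem~\ref{thm:DMOG_unbounded_algo}.}

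The plan is to build directly on the ISG algorithm of Theorem~\ref{thm:induced_subgraph_algo}, replacing the abstract ``vertex arrives'' events by ``subpattern arrives'' events detected with an Aho--Corasick automaton, and then patching up the three discrepancies between ISG and DMOG noted in the text: (i) several subpatterns may arrive on one character because one subpattern can be a proper suffix of another; (ii) subpatterns have length greater than one, so a match of a first subpattern and a match of a second subpattern must not overlap; and (iii) we must not be forced to do too much cleanup work at any single character. First I would construct $G_D$: one vertex in $L$ for each distinct first subpattern, one vertex in $R$ for each distinct second subpattern, and one directed edge per pattern $P_i = P_i^{(1)}\{0,\infty\}P_i^{(2)}$ from the $L$-copy of $P_i^{(1)}$ to the $R$-copy of $P_i^{(2)}$; run the Chiba--Nishizeki greedy algorithm to obtain a $\degen(G_D)$-orientation, and build two AC automata (or one automaton over the union of all subpatterns with appropriate marking) so that as text characters stream in we can enumerate, in time proportional to the number of subpattern occurrences ending at the current character, exactly which $L$- and $R$-vertices ``arrive.'' All of this is $O(|D|)$ preprocessing and $O(|D|)$ space.

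Next I would describe the per-character work. When a character $c$ arrives, feed it to the automaton; this yields a set of arriving vertices whose size is at most $\lsc$ by definition of $\lsc$. For each arriving $L$-vertex $u$, scan its at most $\degen(G_D)$ assigned-neighbors $v\in R$ and append $u$ (tagged with the current time stamp) to the reporting list $\rl_v$; this costs $O(\degen(G_D))$ per arriving $L$-vertex. For each arriving $R$-vertex $v$ (associated with a second subpattern of length $m_v$), we must report every pattern $(u,v)$ such that the first subpattern $u$ occurred earlier with no overlap. Following the ISG template, the edges to report split into two groups: those oriented into $v$, which are recorded in $\rl_v$, and those oriented out of $v$, i.e. edges assigned to $v$ — of which there are at most $\degen(G_D)$, so we simply scan them and check whether the corresponding $L$-vertex has a sufficiently old time stamp recorded. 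Summing over the at most $\lsc$ arriving vertices of each side gives the claimed $O(\degen(G_D)\cdot\lsc + op)$, provided the two remaining issues are handled without blowing up this bound.

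The main obstacle — and the step I would spend the most care on — is the non-overlap / delayed-cleanup mechanism for the reporting lists. An entry $u$ placed in $\rl_v$ at time $t$ is only legitimate to report from time $t+m_v-1$ onward (so that the occurrence of the length-$m_v$ subpattern of $v$ does not overlap the occurrence of $u$), and in the unbounded case it never expires after that. The naive fix of scheduling each entry to ``activate'' after exactly $m_v-1$ steps could force us to touch many list entries at one character. The resolution, as hinted in the text, is to use a single uniform delay $M-1$, where $M = \max_{v\in R} m_v$: an entry inserted at time $t$ is simply held in a staging buffer and only migrated into the ``active'' part of $\rl_v$ at time $t+M-1$. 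Because all entries share the same delay, the staging buffer is a FIFO queue and migration is $O(1)$ amortized (indeed worst-case, since at most $\lsc\cdot\degen(G_D)$ entries enter per step and hence at most that many leave). The price is that an active entry $u\in\rl_v$ with $t+m_v-1 \le \text{now} < t+M-1$ is legitimate but still sitting in staging, so when $v$ arrives we must also scan the staging entries destined for $v$ and test each one's time stamp against $m_v-1$; using per-vertex sublists and pointer bookkeeping this scan is charged either to $op$ (if the entry is reportable) or to the $\degen(G_D)$-sized scan of edges assigned to $v$ (if not), so it stays within budget. I would also verify the boundary conventions — that the $R$-copy of a vertex is processed before its $L$-copy so a pattern whose two subpatterns coincide and overlap is not spuriously reported, and that the $O(\log|\Sigma|)$ per-character automaton cost is absorbed by the constant-$|\Sigma|$ assumption — and then conclude that the total is $O(|D|)$ preprocessing, $O(\degen(G_D)\cdot\lsc + op)$ per character, and $O(|D|)$ space.
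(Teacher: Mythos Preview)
Your approach is essentially the paper's: reduce DMOG to ISG on $G_D$, detect subpattern arrivals with an AC automaton, absorb the at-most-$\lsc$ simultaneous arrivals into a multiplicative $\lsc$ factor, and handle the non-overlap constraint via a delay mechanism plus ``standard list and pointer techniques.'' The paper's own argument for Theorem~\ref{thm:DMOG_unbounded_algo} is no more detailed than yours on this last point.

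There is, however, one step that does not hold as stated. When a vertex $v\in R$ arrives and you scan the staging entries destined for $v$, you claim every non-reportable entry can be ``charged to the $\degen(G_D)$-sized scan of edges assigned to $v$.'' But staging entries come from \emph{responsible}-neighbors of $v$ (edges oriented \emph{into} $v$ in the orientation), not assigned-neighbors; their number is not bounded by $\degen(G_D)$, so this charging does not work. The easy fix --- and presumably what the paper means --- is to keep $\rl_v$ (and any per-$v$ staging sublist) in insertion-time order. An entry with timestamp $t$ is reportable iff $t \le \text{now} - m_v$, so the reportable entries form a \emph{prefix}; scan from the oldest end and stop at the first non-reportable entry. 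This costs $O(op_v + 1)$, and the single extra probe is absorbed into the $O(\lsc)$ term.

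Once you have this prefix-scan observation, the whole $M{-}1$ staging layer is unnecessary in the unbounded case: insert $u$ directly into $\rl_v$ with its timestamp and rely on the prefix scan. (The uniform $M{-}1$ delay is what the paper introduces to synchronize \emph{removals} in the bounded case.) Keeping at most one entry per responsible-neighbor in each $\rl_v$ --- the earliest timestamp suffices, since in the unbounded setting once-reportable-always-reportable --- then gives the $O(|D|)$ space bound directly.
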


\begin{theorem}\label{thm:DMOG_nomult_uniform_bound_algo}
The DMOG problem with uniformly bounded gap borders can be solved such that dictionary patterns are reported online in:
$O(|D|)$ preprocessing time, $O(\degen(G_D)\cdot \lsc+ op)$ time per text character, where $op$ is the number of patterns that are reported due to the character arriving, and $O(|D|+\lsc\cdot(\beta-\alpha+ M) + \alpha)$ space.
\end{theorem}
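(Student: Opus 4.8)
The plan is to bootstrap the uniformly bounded ISG algorithm of Theorem~\ref{thm:induced_subgraph_unifrom_bound_algo} into a DMOG algorithm, paying the $\lsc$ overhead for subpatterns that arrive simultaneously and absorbing the fact that subpatterns are long strings (rather than single characters) into the additive space term. First I would fix the bipartite representation $G_D$ of the dictionary: $L$ contains a vertex per distinct first subpattern, $R$ a vertex per distinct second subpattern, and each pattern $P_i = P_i^1\{\alpha,\beta\}P_i^2$ becomes a directed edge from the copy of $P_i^1$ to the copy of $P_i^2$. Since $G_D$ has $d$ edges and $O(|D|)$ vertices, the Chiba--Nishizeki greedy procedure~\cite{CN85} computes a $\degen(G_D)$-orientation in $O(|D|)$ time, giving each vertex $O(\degen(G_D))$ assigned-neighbors. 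In parallel I would build an Aho--Corasick automaton over the set of all subpatterns; feeding $T$ through it a character at a time, the arrival of a text character triggers the ``arrival'' of exactly those subpatterns that are suffixes of the current text prefix, a set of size at most $\lsc$ by definition. Thus one text character causes up to $\lsc$ vertex arrivals in $G_D$, and running the ISG machinery once per such vertex already yields the $O(\degen(G_D)\cdot\lsc)$ multiplicative part of the bound, before output.

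Next I would adapt the active-window bookkeeping of Theorem~\ref{thm:induced_subgraph_unifrom_bound_algo} so that the gap $\{\alpha,\beta\}$ is measured between the end of an occurrence of a first subpattern and the start of an occurrence of a second subpattern; equivalently the two occurrences must not overlap and the distance between their endpoints lies in a window shifted by $m_v$, the length of the second subpattern $v$. Concretely, each $v\in R$ keeps a reporting list $\rl_v$ of those responsible-neighbors $u$ whose occurrence of $P^1$ ended inside $v$'s current active window, and each $u\in L$ keeps a sorted list $\tau_u$ of the endpoint timestamps at which $P^1$ occurred within the last $\beta$ steps; when $u\in L$ arrives, its effect on the reporting lists of its $O(\degen(G_D))$ assigned-neighbors is delayed by $\alpha$ steps (handled through a small buffer of the last $\alpha$ arrivals), and when $v\in R$ arrives we scan both $\rl_v$ and the $O(\degen(G_D))$ edges assigned to $v$, checking $\tau_u$ for a timestamp in that edge's window. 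The subtlety is ``forgetting'': an entry of $\rl_v$ contributed by $u$ must be dropped once $u$'s occurrence leaves $v$'s window, but because $m_v$ varies over $R$ the exact expiry differs per vertex, and an exact per-vertex removal could force deleting many entries at a single step. Following the sketch in Section~\ref{s:DMOGinduced}, I would instead delay every removal uniformly by $M-1$ steps, where $M$ is the length of the longest second subpattern; entries retained too long are flagged stale, so a scan of $\rl_v$ can skip them output-sensitively via the standard splice-out-on-touch linked-list trick, charging only $O(op)$ to genuinely reported patterns and the rest to the $O(\degen(G_D)\cdot\lsc)$ budget.

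For the resource accounting: preprocessing is $O(|D|)$ for both the automaton and the orientation; query time per character is $O(\degen(G_D)\cdot\lsc)$ for the up-to-$\lsc$ ISG steps plus $O(op)$ for actual reports. Space is $O(|D|)$ for $G_D$, its orientation, and the automaton, plus the reporting and timestamp lists, which still total $O(|D|)$ since each responsible-neighbor relation contributes $O(1)$ per pattern (a responsible-neighbor is recorded without repetition within the window, and the $M{-}1$-delayed stale copies are likewise at most one per pattern). The remaining sliding buffers cost $O(\alpha)$ for the $\alpha$-step delay queue of $L$-arrivals and $O(\lsc\cdot(\beta-\alpha+M))$ for the queue of pending expirations (up to $\lsc$ arrivals per step over a window of $\beta-\alpha+M$ steps), giving the claimed $O(|D|+\lsc\cdot(\beta-\alpha+M)+\alpha)$.

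The main obstacle I expect is the correctness of the uniform $M{-}1$ delay together with the output-sensitive cleanup: one must argue that no pattern is ever reported outside its $\{\alpha,\beta\}$ window despite entries lingering in $\rl_v$, that every valid occurrence is reported exactly once, and that the bookkeeping needed to recognize and skip stale entries (and to reconcile it with the $\alpha$-delay on $L$-arrivals) does not inflate the per-character cost beyond $O(\degen(G_D)\cdot\lsc+op)$. The rest — the automaton, the orientation, and the window queues — is routine once Theorem~\ref{thm:induced_subgraph_unifrom_bound_algo} is in hand.
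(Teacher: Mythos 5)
Your proposal is correct and takes essentially the same approach as the paper: bipartite $G_D$ with a $\degen(G_D)$-orientation, an Aho--Corasick automaton contributing the $\lsc$ multiplicative overhead, the ISG uniformly-bounded machinery with an $\alpha$-delay buffer, the uniform $M{-}1$ delay on removals to avoid batched deletions with stale entries skipped output-sensitively, and the same space accounting. The ``main obstacle'' you flag (correctness of the uniform delay plus output-sensitive cleanup) is exactly the part the paper itself treats only by appeal to ``standard list and pointer techniques.''
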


\begin{theorem}\label{thm:DMOG_non_uniform_bound_algo}
The DMOG problem with non-uniformly bounded gap borders can be solved such that dictionary patterns are reported online in:
$O(|D|)$ preprocessing time, $\ot(\degen(G_D)\cdot \lsc+ op)$ time per text character, where $op$ is the number of patterns that are reported due to the character arriving, and $\ot(|D|+\lsc\cdot \degen(G_D)(\beta^*-\alpha^*+ M)+\alpha^* )$ space.
\end{theorem}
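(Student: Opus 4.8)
The plan is to build the non-uniformly bounded DMOG algorithm on top of the non-uniformly bounded ISG data structure of Section~\ref{ss:non-uniformly-bounded-ISG}, composing it with the Aho-Corasick automaton and the synchronization/delay machinery described above, then carefully accounting for the extra polylogarithmic factors coming from the range-reporting structures. First I would set up the bipartite multi-graph $G_D$: each first subpattern is a vertex in $L$, each second subpattern a vertex in $R$, and each pattern $P_i = P_i^{(1)}\{\alpha_i,\beta_i\}P_i^{(2)}$ is an edge $e$ from its $L$-endpoint to its $R$-endpoint carrying the interval $[\alpha_e,\beta_e]$. I would run the Chiba--Nishizeki greedy algorithm to compute a $\degen(G_D)$-orientation, so each vertex is responsible for $O(\degen(G_D))$ edges. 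Preprocessing also builds the AC automaton over all subpatterns in $O(|D|)$ time (assuming $|\Sigma|=O(1)$). Feeding the text into the automaton, each character triggers the arrival of a chain of at most $\lsc$ subpattern-vertices; we process each arrival through the ISG-style data structure, which is where the multiplicative $\lsc$ enters the query bound.

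Next I would handle the two new obstacles relative to pure ISG. The first is the delay: when $u\in L$ arrives, the pattern edge $(u,v)$ should only become ``live'' $m_v-1$ characters later (so that the two subpattern occurrences do not overlap), and — because of the non-uniform gap bounds — an edge $e$ stays reportable only while the current time minus the $L$-arrival time lies in $[\alpha_e+m_v-1,\ \beta_e+m_v-1]$. Rather than deleting entries from each reporting list $\rl_v$ at its own expiry time (which could force $\omega(\degen(G_D))$ deletions at one step), I adopt the paper's stated trick of delaying all removals uniformly by $M-1$, where $M$ is the longest $R$-subpattern; this keeps each reporting list bounded by $O(\degen(G_D)(\beta^*-\alpha^*+M))$ in size over its active window and keeps amortized maintenance cheap, at the cost of $\rl_v$ now containing some entries that are not currently reportable. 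The second obstacle is filtering these stale/not-yet-live entries at query time: when $v\in R$ arrives we must enumerate exactly those responsible-neighbors $u$ whose most recent arrival time $t_u$ satisfies, for the edge $e=(u,v)$, $\ $current time$\ -t_u \in [\alpha_e+m_v-1,\beta_e+m_v-1]$. This is precisely a 4-sided 2-dimensional orthogonal range-reporting query — one axis being arrival time $t_u$, the other being $\alpha_e$ (equivalently $\beta_e$), with each edge a point — which is exactly the tool Section~\ref{ss:non-uniformly-bounded-ISG} promises; it answers in $\ot(1)$ per reported point after $\ot(\cdot)$-space preprocessing, so the per-arrival cost is $\ot(\degen(G_D)+op)$ and the per-character cost is $\ot(\lsc\cdot\degen(G_D)+op)$.

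For the space bound I would account for the pieces separately: the AC automaton and the orientation take $O(|D|)$; the sliding window of the last $\beta^*$ arrived vertices takes $O(\beta^*)$ but is dominated; each reporting list over its active window holds $O(\degen(G_D)(\beta^*-\alpha^*+M))$ entries and, because at most $\lsc$ vertices of $R$ can be simultaneously ``recently active'' in the relevant sense, the $\alpha^*$-delayed buffering of $L$-arrivals plus the range-reporting structures contribute the claimed $\ot(|D|+\lsc\cdot\degen(G_D)(\beta^*-\alpha^*+M)+\alpha^*)$; the $\ot(\cdot)$ absorbs the $O(\mathrm{polylog})$ overhead of the range structure. I would close by verifying that each reported pattern occurrence is reported exactly once at the correct text position — correctness of the gap interval follows from the range query, correctness of non-overlap follows from the $m_v-1$ shift built into the query bounds, and the $L$-to-$R$ ordering within a single character's arrival chain (copying the ISG convention that the $R$-copy precedes the $L$-copy) prevents a pattern whose two subpatterns meet at one character from being spuriously emitted.

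The main obstacle I expect is the interaction between the uniform $M-1$ removal delay and the per-edge gap intervals: one must argue that bundling removals does not blow up either the size of a reporting list or the amortized work, while still guaranteeing that the 4-sided range query recovers \emph{exactly} the reportable edges and nothing stale — in particular that the ``forgetting'' of $L$-arrivals after $\beta^*$ steps is compatible with edges whose effective expiry is $\beta_e+m_v-1$ rather than $\beta^*$. Getting the window bookkeeping and the coordinates of the range-reporting points precisely right, so that the $\ot$ factors land where claimed, is the delicate part; everything else is a routine composition of the ISG building blocks from Sections~\ref{s:induced} and~\ref{ss:non-uniformly-bounded-ISG} with the AC automaton and the $\lsc$ accounting.
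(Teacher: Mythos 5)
Your overall architecture — bipartite multigraph $G_D$, Chiba--Nishizeki $\degen(G_D)$-orientation, Aho--Corasick to drive up to $\lsc$ vertex arrivals per character, the $m_v-1$ non-overlap shift, the uniform $M-1$ removal delay, and an orthogonal range-reporting structure per $v\in R$ to filter edges by their individual $[\alpha_e,\beta_e]$ windows — matches what the paper does. (The paper in fact never spells out a proof of this theorem; it defers to ``combining'' the Section~3 ISG machinery with the Section~5 discussion, and the only detailed argument is the appendix proof of the non-uniform ISG variant.)

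The one concrete gap is your choice of axes for the range-reporting reduction. You propose one axis $t_u$ (arrival time) and the other $\alpha_e$ ``(equivalently $\beta_e$)''. With those axes the reportability predicate $t - t_u \in [\alpha_e + m_v - 1,\ \beta_e + m_v - 1]$ becomes $t_u + \alpha_e \le t - m_v + 1$, i.e.\ a \emph{diagonal} half-plane in the $(t_u,\alpha_e)$ plane, not an axis-parallel range; and the other constraint involves $\beta_e$, which is not even on either axis (and is not interchangeable with $\alpha_e$ on a per-edge basis). As stated, this is not a rectangle query and the reduction fails. The paper's appendix proof instead stores, for an $L$-arrival at time $t_u$ along edge $e=(u,v)$, the point $\bigl(t_u+\alpha_e,\ t_u+\beta_e\bigr)$ in $S_v$; then the query at time $t$ (shifted by $m_v-1$) is ``first coordinate $\le t-m_v+1$ and second coordinate $\ge t-m_v+1$,'' an axis-parallel dominance query that also automatically filters the stale points left in $S_v$ by the batched $M-1$ removal delay. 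If you replace your axes with this linear change of variables, the rest of your accounting (per-character $\ot(\lsc\cdot\degen(G_D)+\mathrm{op})$ time, and $\ot(|D|+\lsc\cdot\degen(G_D)(\beta^*-\alpha^*+M)+\alpha^*)$ space, coming from $\lsc$ arrivals per step times $\degen(G_D)$ insertions each over an active window of length $O(\beta^*-\alpha^*+M)$) goes through as you describe.
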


\section{DMOG via Threshold Orientations}\label{s:threshold}

Sections~\ref{s:induced} and~\ref{s:DMOGinduced} focus on orientations whose out-degree is bounded by $\degen(G_D)$. Thus, when $\degen(G_D)=\sqrt d$ the DMOG algorithms take $O(\lsc \cdot \sqrt d)$ time. This is exacerbated in the non-uniform case where the degeneracy can be much larger, since the input graph is a multi-graph. In this section we show how in such dense inputs we can reduce the dependence on $\lsc\cdot \degen(G_D)$ to $\sqrt {\lsc\cdot d}$, by using a different method for orienting the graph which we refer to as a \emph{threshold} orientation.

\begin{definition}\label{d:frequent}
A vertex in $G_D$ is \emph{heavy} if it has more than $\sqrt{d/\lsc} $ neighbors, and \emph{light} otherwise.
\end{definition}

Our algorithms use two key properties. The first is that light vertices have at most $\sqrt{d/\lsc} $ neighbors, and the second is that the number of heavy vertices is less than $\sqrt{d/\lsc}$. As in the previous algorithms, for each vertex $v\in R$ we maintain either a reporting list $\rl_v$ in the uniform case or the data structure $S_v$ in the non-uniform case, and for each vertex $u\in L$ we maintain the list $\tau_u$. These structures enable dealing with edges where at least one of its endpoints is light, and so we orient all edges that touch a light vertex to leave that vertex, breaking ties arbitrarily if both vertices are light. In particular, if a vertex (heavy or light) $u\in L$ arrives then it adds the arrival to $\tau_u$, and if $u$ is light then it updates the data-structures in all of its out-going neighbors in $R$. Since there are at most $\lsc$ vertices arriving at a time, this costs at most $\ot(\lsc + \sqrt{\lsc\cdot d})$ time. If a vertex (heavy or light) $v\in R$ arrives then it reports relevant information from its responsible-neighbors, which must all be light vertices, in time proportional to the output size, and if $v$ is light then it can afford to scan all of its assigned-neighbors. Since there are at most $\lsc$ vertices arriving at a time, this costs at most $\ot(\lsc + \sqrt{\lsc\cdot d} + op)$ time.

The remaining task is reporting edges between two heavy vertices. From a very high level, we will leverage the fact that the number of heavy vertices is less than $\sqrt{\lsc \cdot d}$, and so even if the number of vertices from $L$ that arrive at the same time can be as large as $\lsc$ and the number of neighbors of each such vertex can be very large, the number of vertices in $R$ is still less than $\sqrt{\lsc \cdot d}$. So using a batched scan on all of $R$ will keep the time cost low. We show that after some preprocessing such a scan can produce the desired result.

\paragraph{The tree structure.}
Out treatment of heavy vertices uses a special tree-like structure $T$ among the subpatterns associated with vertices from $L$. In particular, the vertices of $T$ are the $O(\sqrt{\lsc\cdot d})$ vertices from $L$, where a vertex $u$ is an ancestor of a vertex $v$ if and only if the subpattern associated with $u$ is a suffix of the subpattern associated with $v$. To make the structure an actual tree we also add an additional special vertex corresponding to the empty string as the root of $T$ since it is a suffix of every subpattern. The tree $T$ can be constructed in linear time from the AC automaton of $D$. Notice that the depth of the tree is $\lsc$. Furthermore, the graph vertices arriving due to a text character arrival correspond to all of the vertices on some path from the root of $T$ to some vertex $u$, not including the root. This means that the arrival of $u$ implies the arrival of all of its at most $\lsc$ ancestors, not including the root. We emphasize that the AC automaton mechanics allow one to report only one vertex $u$ such that $u$ and all of its ancestors exactly correspond to the subpatterns that have arrived (implicitly).

We briefly overview the algorithm for the uniformly bounded case. The algorithm for the non-uniformly bounded case have a similar flavor.
Let $k=|R|$ and let $R=\{v_1,v_2,\ldots,v_{k}\}$. For each vertex $u\in L$ let $A_u$ be an array of size $k$ where the $i$'th location in $A_u$, denoted by $A_u[i]$, is a pointer to a list of all of the edges that connect an ancestor of $u$ in $T$ (which may be $u$) to $v_i$. When $u\in L$ arrives then for each pointer $A_u[i]\neq null$ we would like to add $A_u[i]$ to the beginning of $\rl_{v_i}$. Recall that if $A_u[i]$ was already in the reporting list $\rl_{v_i}$ then we remove the other older copy of $A_u[i]$ from $\rl_{v_i}$, in order to keep the space usage linear (as in Section~\ref{s:induced}). When a vertex $v_i\in R$ arrives we scan $\rl_{v_i}$ as in Section~\ref{s:DMOGinduced}, and for each edge $(u,v_i)$ in a $\rl_{v_i}$ that has appeared in the appropriate time frame (since we need to skip the tail of $\rl_v$ as in Section~\ref{s:DMOGinduced}) we scan the list $\tau_u$ (which is still maintained as before) and report all of the appropriate edges. Finally, after $\beta-\alpha+M+1$ time units have passed since the last time $u\in L$ has updated some reporting lists, we remove all of the $A_u[i]\neq null$ from each $\rl_{v_i}$. The rest of the implementation details discuss how the array $A_u$ can be computed efficiently during the query phase, with low preprocessing time and space.

The following Theorems are proven in Appendix~\ref{app:threshold}.

\begin{theorem}\label{t:uniform}
The DMOG problem with one gap and uniform gap borders can be solved with $O(|D|)$ preprocessing time, $O(\lsc+\sqrt{\lsc\cdot d }+ op)$ time per text character, and $O(|D|+ \lsc (\beta-\alpha +M)+\alpha)$ space.
\end{theorem}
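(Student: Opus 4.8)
The plan is to flesh out the threshold-orientation scheme of Section~\ref{s:threshold}, filling in the two parts it defers: edges incident to a light vertex, and the on-the-fly construction of the arrays $A_u$ for heavy--heavy edges. Everything rests on the two structural facts: a light vertex has at most $\sqrt{d/\lsc}$ neighbors, and, since $G_D$ has $d$ edges, there are fewer than $2\sqrt{\lsc\cdot d}$ heavy vertices. First I would orient every edge touching a light vertex out of that vertex (ties broken arbitrarily when both endpoints are light) and reuse \emph{verbatim} the data structures of Section~\ref{ss:uniformly-bounded-ISG}: a reporting list $\rl_v$ for each $v\in R$ holding the responsible- (hence light) neighbors that appeared in the active window, without repetition, and a timestamp list $\tau_u$ for each $u\in L$; subpattern arrivals are detected with the AC automaton. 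Since a text character triggers at most $\lsc$ vertex arrivals and a \emph{light} $u\in L$ has at most $\sqrt{d/\lsc}$ out-neighbors to update, this part costs $O(\lsc+\sqrt{\lsc\cdot d}+op)$ per character. The bounded-gap bookkeeping — treating $R$-arrivals immediately but $L$-arrivals only after $\alpha$ steps, forgetting an arrival after $\beta$ steps, and postponing the removal of $u$ from $\rl_v$ by an extra $M-1$ steps so that appearances of long subpatterns do not overlap — is handled exactly as in Sections~\ref{ss:uniformly-bounded-ISG} and~\ref{s:DMOGinduced}, via a circular buffer of the last $\beta$ arrivals and by skipping the stale tail of each reporting list with standard pointer bookkeeping.

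For heavy--heavy edges I would build the suffix-tree-like structure $T$ on the $O(\sqrt{\lsc\cdot d})$ heavy $L$-vertices (plus an empty-string root), which is obtainable in $O(|D|)$ time from the AC automaton and has depth $\lsc$, and restrict attention to the at most $k=O(\sqrt{\lsc\cdot d})$ heavy vertices $v_1,\dots,v_k$ of $R$. Per text character the AC automaton implicitly reports a single deepest heavy $u\in L$ whose arrived heavy ancestors are exactly the root-to-$u$ path, and $A_u$ already aggregates those ancestors' edges, so a single update suffices: on arrival of $u$ we splice each non-null $A_u[i]$ to the front of $\rl_{v_i}$, deleting any older occurrence so the reporting lists stay of total size $O(d)$; on arrival of a heavy $v_i\in R$ we scan $\rl_{v_i}$, skip its stale tail as above, and for each surviving edge $(u,v_i)$ scan $\tau_u$ and emit the matches; and $\beta-\alpha+M+1$ steps after $u$'s last update we delete its entries from the affected $\rl_{v_i}$'s. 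Because $A_u$ has at most $k=O(\sqrt{\lsc\cdot d})$ non-null slots, every insertion, deletion, and tail-skip step is charged either to this bound or to $op$, which (amortized, then de-amortized with the usual queue trick) gives $O(\lsc+\sqrt{\lsc\cdot d}+op)$ per character for this part as well.

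The one genuinely delicate point — ``the rest of the implementation details'' — is producing $A_u$ during the query phase \emph{without} the $\Omega(\lsc\cdot d)$ time and space that explicitly storing every $A_u$ would require, while keeping preprocessing at $O(|D|)$; I expect this to be the main obstacle, the light-vertex part and the bounded-gap delays being routine extensions of earlier sections. My plan is not to materialize $A_u$ at all: fix an Euler-tour numbering of $T$ so that $w$ is an ancestor of $u$ exactly when $\mathrm{in}(u)\in[\mathrm{in}(w),\mathrm{out}(w)]$, and for each heavy $v_i\in R$ store its heavy $L$-neighbors as the corresponding family of (nested or disjoint) intervals; then the non-null slots of $A_u$, together with their edge lists, are recovered by, for each $i$, a stabbing query with the point $\mathrm{in}(u)$, each answered in $O(1+\text{size})$ on the word RAM since the universe has size $O(\sqrt{\lsc\cdot d})$, after $O(d)$-time preprocessing. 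Summed over the $k=O(\sqrt{\lsc\cdot d})$ indices this is $O(\sqrt{\lsc\cdot d})$ plus a term proportional to the number of edges pulled in, which is absorbed by the $O(d)$ reporting-list bound and by $op$; the same mechanism is re-invoked at removal time so that only the $O(1)$-size AC node of $u$ need be remembered for $\beta-\alpha+M$ steps. Finally, summing up: $O(|D|)$ preprocessing (AC automaton, threshold orientation, tree $T$, the stabbing structures), $O(\lsc+\sqrt{\lsc\cdot d}+op)$ per character from the two parts above, and $O(|D|)$ for the static structures together with $O(\lsc(\beta-\alpha+M))$ for the active-window buffers and $O(\alpha)$ for the $L$-delay buffer yields the claimed bounds.
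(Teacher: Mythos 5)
Your high-level plan tracks the paper's closely: a threshold orientation with light-vertex edges oriented out and handled by the Section~\ref{ss:uniformly-bounded-ISG} machinery, the tree $T$ over the $O(\sqrt{\lsc\cdot d})$ heavy $L$-vertices, arrays $A_u$ feeding the reporting lists $\rl_{v_i}$ of heavy $R$-vertices, and reuse of the $\alpha$/$\beta$/$M$ delay bookkeeping from Sections~\ref{ss:uniformly-bounded-ISG} and~\ref{s:DMOGinduced}. You also correctly isolate the one non-routine step---computing $A_u$ on the fly without $\Omega(\lsc\cdot d)$ time or space. The paper and your proposal diverge precisely there, and your version, as written, has a time-bound gap.

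The paper's invariant is that $A_u[i]$ is a \emph{single pointer}: it points to the edge $(w,v_i)$ for the lowest ancestor $w$ of $u$ adjacent to $v_i$, and the remaining ancestors of $u$ adjacent to $v_i$ are reached lazily, only when $v_i$ actually arrives, through precomputed $next(e)$ pointers, so that the lazy traversal is fully chargeable to $op$. To produce these $k$ single pointers in $O(\sqrt{\lsc\cdot d})$ time per arriving $u$, the paper partitions $T$ greedily (bottom-up) into $O(\sqrt{d/\lsc})$ subtrees each of total weight $\Theta(\sqrt{\lsc\cdot d})$, precomputes $A_v$ only at the subtree roots (``special vertices'') in $O(d)$ total preprocessing, and then builds $A_u$ by walking to the nearest special ancestor and inheriting the rest; the partitioning guarantee is exactly that this walk has weight $O(\sqrt{\lsc\cdot d})$.

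Your replacement instead recovers ``the non-null slots of $A_u$, \emph{together with their edge lists},'' via per-$i$ stabbing queries, ``each answered in $O(1+\text{size})$.'' Summed over the $k$ indices, that charge is $O(k)$ plus the total degree, within heavy $R$, of all heavy ancestors of $u$; this quantity has no relationship to $op$ and can be $\Theta(d)$ on a character that ultimately reports nothing (none of the touched $v_i$ need arrive inside the window). The claim that the excess is ``absorbed by the $O(d)$ reporting-list bound and by $op$'' conflates a space invariant with a time bound: the lists are of total size $O(d)$ at every instant, but they are torn down and rebuilt as the window slides over $|T|$ characters, so aggregate insert/delete work is not $O(d)$. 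To repair this you must not materialize the stabbed sets at insertion time: splice one pointer per $i$, which means the stabbing query should return only the smallest enclosing interval (an $O(1)$-per-$i$ primitive, not $O(1+\text{size})$), and you must additionally build the analogue of the paper's $next(\cdot)$ chain so the remaining ancestors can be walked lazily when $v_i$ arrives. With those two modifications the Euler-tour encoding is a legitimate alternative to the paper's tree-partitioning/special-vertices construction; without them the $O(\lsc+\sqrt{\lsc\cdot d}+op)$ per-character bound does not hold.
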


\begin{theorem}\label{t:non-uniform}
The DMOG problem with one gap and non-uniform gap borders can be solved with $O(|D| + d(\beta^*-\alpha^*))$ preprocessing time, $\ot(\lsc + \sqrt{\lsc \cdot d}(\beta^*-\alpha^* + M) + op)$ time per query text character, and $\ot(|D| + d(\beta^*-\alpha^*)  + \sqrt{\lsc\cdot d}(\beta^*-\alpha^*+M)
+ \alpha^*)$ space.
\end{theorem}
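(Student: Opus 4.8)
\textbf{Proof plan for Theorem~\ref{t:non-uniform}.}
The plan is to run the threshold-orientation scheme exactly as in the uniform case sketched above, but (i) to replace each reporting list $\rl_v$ at a vertex $v\in R$ by the non-uniform per-vertex structure $S_v$ of Section~\ref{ss:non-uniformly-bounded-ISG} (the $4$-sided $2$-dimensional orthogonal range reporting structure), and (ii) to let every non-null pointer $A_u[i]$ address a list of \emph{multi}-edges, each tagged with its own border pair $[\alpha_e,\beta_e]$. In preprocessing I would build the AC automaton of $D$, the tree $T$ on the $O(\sqrt{\lsc\cdot d})$ heavy vertices of $L$, classify every vertex of $G_D$ as heavy or light via the threshold $\sqrt{d/\lsc}$ of Definition~\ref{d:frequent}, orient every edge incident to a light vertex out of that vertex (ties broken arbitrarily), and tabulate, per heavy--heavy edge $e$, the information needed to test ``offset in $[\alpha_e,\beta_e]$'' in $O(1)$ time; this last item is what contributes the $O(d(\beta^*-\alpha^*))$ preprocessing and space terms, since summing over edges gives $\sum_e(\beta_e-\alpha_e)\le d(\beta^*-\alpha^*)$. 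Because the active window now has width $w=\beta^*-\alpha^*$, each right vertex additionally keeps, over the last $w+M$ time steps, the timestamps/points needed for these tests, and --- as in Section~\ref{s:DMOGinduced} --- the $M-1$ slack is precisely what permits delaying and then batching removals rather than paying for them one multi-edge at a time.

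Edges with at least one light endpoint are handled by the non-uniform ISG argument amplified by $\lsc$: when a (heavy or light) $u\in L$ arrives it appends the current time to $\tau_u$, and if $u$ is light it inserts itself, tagged with that time, into $S_v$ for each of its at most $\sqrt{d/\lsc}$ out-neighbors $v\in R$; when a (heavy or light) $v\in R$ arrives it issues the range query against the responsible-neighbor data stored at $v$ --- whose sources are all light --- reporting every edge $e$ whose recorded offset lies in $[\alpha_e,\beta_e]$, which is exactly the $4$-sided query of Section~\ref{ss:non-uniformly-bounded-ISG}. Since at most $\lsc$ vertices arrive per character and each light vertex has at most $\sqrt{d/\lsc}$ neighbors, the non-output cost here is $\ot(\lsc\cdot\sqrt{d/\lsc})=\ot(\sqrt{\lsc\cdot d})$ per character, up to the poly-log overhead of the range structure and of the $w+M$-window bookkeeping.

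The core of the argument is the heavy--heavy case, handled by the arrays $A_u$ over $T$. When $u\in L$ arrives I walk it up $T$ and, for each $i$ with $A_u[i]\neq\mathrm{null}$, splice $A_u[i]$ to the front of the structure kept at $v_i$ (deleting any older occurrence of that pointer, to keep space linear in the number of distinct heavy--heavy pairs), scheduling its removal $w+M+1$ steps later. When a heavy $v_i\in R$ arrives I perform the batched scan over the $O(\sqrt{\lsc\cdot d})$ heavy right vertices, and for each still-live heavy left endpoint $u$ at $v_i$ I scan $\tau_u$ over the active window and emit every multi-edge $e=(u,v_i)$ whose offset is in $[\alpha_e,\beta_e]$; as there are $O(\sqrt{\lsc\cdot d})$ heavy left vertices each keeping a window of size $O(w+M)$, this costs $\ot(\sqrt{\lsc\cdot d}\,(w+M)+op)$ per character. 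The main obstacle I expect is this last implementation point: storing every $A_u$ explicitly would cost $\Omega(\sqrt{\lsc\cdot d}\cdot|R|)$ space, so each $A_u$ must instead be reconstructed on the fly from the static heavy--heavy edge lists along the root-to-$u$ path of $T$ (merging children's arrays, or propagating lazily), and one must verify that this reconstruction fits within the $\ot(\lsc+\sqrt{\lsc\cdot d}(\beta^*-\alpha^*+M))$ per-character budget and within the claimed space. Summing the three cases, together with the $O(\alpha^*)$ buffer implementing the $\alpha^*$-delay of left-vertex treatment and the $\ot(|D|)$ for the AC automaton and $T$, yields preprocessing $O(|D|+d(\beta^*-\alpha^*))$, per-character time $\ot(\lsc+\sqrt{\lsc\cdot d}(\beta^*-\alpha^*+M)+op)$, and space $\ot(|D|+d(\beta^*-\alpha^*)+\sqrt{\lsc\cdot d}(\beta^*-\alpha^*+M)+\alpha^*)$, as claimed.
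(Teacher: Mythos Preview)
Your treatment of the light-vertex side is essentially what the paper does, but the heavy--heavy case is where your proposal and the paper part ways, and where your argument has a real gap.

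You keep the uniform-case arrays $A_u[\cdot]$ indexed only by the right vertex $v_i$, and propose to handle the non-uniformity at query time: when $v_i\in R$ arrives, for each live heavy $u$ you scan $\tau_u$ and ``emit every multi-edge $e=(u,v_i)$ whose offset is in $[\alpha_e,\beta_e]$.'' The problem is that last enumeration step. Your tabulation gives an $O(1)$ membership test \emph{per edge}, but that does not let you list the matching multi-edges output-sensitively: for a fixed $(u,v_i)$ and a fixed offset $j$ there may be many parallel edges, most of which do not cover $j$, and you would have to touch all of them. Worse, $A_u[i]$ is supposed to cover not just $u$ but all ancestors of $u$ in $T$, so the number of candidate edges you must test against each timestamp can be $\Theta(d)$. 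Your accounting $\ot(\sqrt{\lsc\cdot d}(w+M)+op)$ charges only for scanning timestamps, not for the per-timestamp edge filtering, and it also does not visibly absorb the factor of $\lsc$ from several right vertices arriving on the same character.

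The paper avoids all of this by changing the index from $(u,i)$ to $(u,i,j)$, where $j$ is the offset. It replaces $A_u$ by arrays $W_{u,i}$ of length $\beta^*-\alpha^*+1$ and, for every heavy--heavy edge $e=(u,v_i)$ and every $j\in[\alpha_e,\beta_e]$, builds a pointer $next_e[j]$ that threads together, in ancestor order along $T$, all edges to $v_i$ whose interval covers $j$; this linked structure is what the $\sum_e(\beta_e-\alpha_e+1)\le d(\beta^*-\alpha^*)$ term actually pays for. Each $v_i\in R$ keeps a cyclic active-window array $AW_i$ of length $\beta^*-\alpha^*+M+1$; when the single deepest heavy $u\in L$ arrives you reconstruct $W_{u,\cdot}$ from the nearest special ancestor in $T$ and drop each non-null $W_{u,i}[j]$ into $AW_i[j+1+m_{v_i}]$. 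When $v_i$ arrives you simply dump $AW_i[1]$: every list there is already filtered by offset, so reporting is purely output-sensitive with no per-edge testing and no scan of $\tau_u$. The special-vertex decomposition of $T$ is the same greedy peeling as in the uniform proof, but with vertex weight $w(u)$ equal to the number of non-null $head_{u,i,j}$ pointers and with subtree threshold $\sqrt{\lsc\cdot d}(\beta^*-\alpha^*)$. In short, the missing ingredient in your plan is the offset-bucketing: the arrays have to be indexed by offset as well as by right vertex, and the per-edge tables you allude to should be the $next_e[\cdot]$ link arrays, not membership tests.
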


\bibliographystyle{plain}
\bibliography{tsvi}

\newpage
\appendix
\section*{Appendix}
\section{Triangle Reporting Problems and Other Popular Conjectures.}\label{s:app_other_clbs}
As noted , many of the lower bounds conditioned on the \ThreeSUM{} conjecture are proven via a series of reductions that pass through either set disjointness queries or set intersection queries (see~\cite{Patrascu10,AW14,KPP14a}), as is the case with the triangle enumeration conditional lower bounds, which as stated lead to lower bounds for both edge-triangle queries and vertex-triangles queries. Since edge-triangle queries can be used to solve set disjointness, another natural candidate for a conjecture with which we can prove that edge-triangle queries are hard is the Boolean Matrix Multiplication (BMM) conjecture, which states that no $O(n^{3-\Omega(1)})$ combinatorial algorithm\footnote{There is no clear definition of a \emph{combinatorial algorithm}, and the notion that is accepted by the algorithmic community is that the way to establish if an algorithm is combinatorial or not is done by just looking at it.} exists for BMM on two $n\times n$ matrices. This is because the answer to a set disjointness query corresponds to the inner product of the characteristic vectors of the two sets, and each entry in the output of BMM is the inner product of one row and one vector from the input matrices. Notice that this approach derives a conditional lower bound from the BMM conjecture for the decision version of edge-triangle queries (does there exist a triangle containing the query edge), which can be solved via the reporting version considered above if we stop the query process after the first triangle is reported, thereby obtaining a conditional lower bound for the reporting version itself.

Just like the decision version of edge-triangle queries corresponds to the inner product of two boolean vectors, we can show that vertex-triangles queries correspond to the outer product of two vectors. However, outer products are too weak of a tool for proving conditional lower bounds from BMM, since the output of the outer product of two vectors of length $n$ is $n^2$, and in order to solve BMM using outer products we need to consider $n$ pairs of vectors and their outer products, resulting in $\Omega(n^3)$ information which is already too much.

Another candidate for proving the hardness of edge-triangle queries is the recent Online Matrix Vector (OMV) multiplication conjecture which states that there is no $O(n^{3-\Omega(1)})$ algorithm for multiplying an $n\times n$ matrix with $n$ vectors of length $n$ each, where the vectors arrive online and the output of the $i$th multiplication must be given prior to the arrival of the $(i+1)$th vector. Since multiplying a matrix with a vector corresponds can be solved via inner products, the connection to edge-triangle queries is clear. However, it is not clear how to use the OMV conjecture to prove some hardness on outer products, and so it is not clear if this conjecture can be used to prove the hardness of vertex-triangles queries, ISG, and DMOG.
\newpage
\section{Table of Upper Bounds}\label{app:upper-bound-table}

\begin{table}[!h]
\centering
\footnotesize{
  \begin{tabular}{|c|c|c|c|c|}
    \hline
    Gaps & Preprocessing & Query Time & Space \\
    Type & Time & per Text Character & \\
    \hline
    \hline
     unbounded & $O(|D|)$ & $O(\delta(G_D)\cdot\lsc +op)$ & $O(|D|)$ \\
    \hline
    uniform & & & \\
    bounds & $O(|D|)$ & $O(\delta(G_D)\cdot\lsc +op)$ & $O(|D|+\lsc(\beta-\alpha+M)+\alpha)$ \\
    \hline
    non-uniform & & & \\
    bounds & $O(|D|)$ & $\tilde{O}(\delta(G_D)\cdot\lsc +op)$& $\tilde{O}(|D|+\lsc(\beta^*-\alpha^*+M)+\alpha^*)$\\
    \hline
    uniform & & & \\
    bounds & $O(|D|)$ & $O(\lsc + \sqrt{\lsc\cdot d} +op)$ & $O(|D|+\lsc(\beta-\alpha+M)+\alpha)$ \\
    \hline
    non-uniform & $O(|D|+d(\beta^*-\alpha^*))$ & $\tilde{O}(\sqrt{\lsc\cdot d}(\beta^*-\alpha^*+M) +op)$& $\tilde{O}(|D|+d(\beta^*-\alpha^*) \makebox[70pt]{ }$ \\
    bounds & & & \,$\makebox[20pt]{ } +\sqrt{\lsc\cdot d}(\beta^*-\alpha^*+M)+\alpha^*)$\\
    \hline
  \end{tabular}\\
\caption{A summary of upper bounds for DMOG described in this paper.}
}
\end{table}

\section{Full Details for Section~\ref{s:induced}}\label{app:induced}

\begin{proof}[Proof of Theorem~\ref{thm:induced_subgraph_unifrom_bound_algo}]
The process for treating the arrival of $v_i$ is as follows. First, we remove $v_{i-\beta-1}$ from the list of the last $\beta$ vertices, and if $v_{i-\beta-1}\in L$ then we remove the time stamp of $i-\beta-1$ from $\tau_{v_{i-\beta-1}}$. If $\tau_{v_{i-\beta-1}}$ becomes empty then we remove $v_{i-\beta-1}$ from all of the reporting lists at the assigned-neighbors of $v_{i-\beta-1}$. Next, if $v_{i-\alpha}\in L$ then we treat it as if it just arrived. To do this, if $\tau_{v_{i-\alpha}}$ was empty before the current arrival, then we add $v_{i-\alpha}$ to the reporting lists of all of its assigned neighbors. We also add ${i-\alpha}$ to $\tau_{v_{i-\alpha}}$.

Finally, we are ready to treat $v_i$. First, $v_i$ is added to the list of the last $\beta$ vertices. Then, if $v_i\in R$ then we treat it as if it arrived right now by: (1) accessing the vertices in $\rl_{v_i}$ and reporting the edges corresponding to their time stamp lists, and (2) scanning the assigned-neighbors of $v_i$ to output any additional edges. The additional space usage for this part of the algorithm is another $O(\beta)$ words (for the list of the last $\beta$ vertices), and the time cost per vertex is $O(\degen(G)+op)$.
\end{proof}

\begin{proof}[Proof of Theorem~\ref{thm:induced_subgraph_mult_nonunifrom_bound_algo}]
Each vertex $u\in L$ maintains $\tau_v$ as in Section~\ref{ss:uniformly-bounded-ISG}. Since not all of the occurrences of $u$ are relevant when one of its responsible-neighbors from $R$ has arrived, the cost of filtering the list is at most $O(\log (\beta^*-\alpha^*)+k)$ where $k$ is the size of the output.

Similar to Section~\ref{ss:uniformly-bounded-ISG}, we maintain a list of the last $\beta^*$ vertices that have arrived during query time. The process for treating the arrival of $v_i$ is as follows. First we remove $v_{i-\beta^*-1}$ from the list of the last $\beta$ vertices, and if $v_{i-\beta-1}\in L$ then we remove the time stamp of $i-\beta^*-1$ from $\tau_{v_{i-\beta^*-1}}$. Next, we remove the at most $\degen(G)$ points that were created by $v_{i-\beta^*-1}$ from all of the orthogonal reporting data-structures at the assigned-neighbors of $v_{i-\beta^*-1}$. Notice that we only remove the points that were due to the arrival of $v_{i-\beta^*-1}$ at time $i-\beta^*-1$ (and not possibly other times).
Next, if $v_{i-\alpha^*}\in L$ then we add $i-\alpha^*$ to $\tau_{v_{i-\alpha^*}}$, and for each assigned-neighbor $v$ of $v_{i-\alpha^*}$ where $e=(v_{i-\alpha^*}, v)$ we insert the point $(i+\alpha_e, i+\beta_e)$ into $S_v$.

Finally, we are ready to treat $v_i$. First $v_i$ is added to the list of the last $\beta$ vertices. Then, if $v_i\in R$ then we: (1) execute a range reporting query in $S_v$ as described above (looking for all points whose first coordinate is at most $i$ and whose second coordinate is at least $i$) (2) and scan the at most $\degen(G)$ assigned-neighbors of $v_i$ to output any additional edges using the list of time stamps.

The additional space usage for the algorithm is $O(\beta^*)$ words for the list of the last $\beta^*$ vertices and all of the lists of time stamps, and another $O(\degen(G)(\beta^*-\alpha^*) \log^{7/8+\epsilon} (\degen(G)(\beta^*-\alpha^*)))$ words for all of the orthogonal range reporting data structures. The cost of processing a vertex in $L$ that arrived either $\alpha^*$ or $\beta^*+1$ time units ago is $O(\degen(G) \log^{7/8+\epsilon} (\degen(G)(\beta^*-\alpha^*)))$ time to add/remove points into/from orthogonal range reporting structures.
The cost of processing a vertex $v\in R$ that just arrived, in addition to the size of the current output, is $O(\frac{\log (\degen(G)(\beta^*-\alpha^*))}{ \log\log (\degen(G)(\beta^*-\alpha^*))})$ time to query $S_v$ and $O(\degen(G)\log (\beta^*-\alpha^*))$ time for scanning all of the assigned-neighbors of $v$.
\end{proof}

\section{Full Details for Section~\ref{s:threshold}}\label{app:threshold}

\begin{proof}[Proof of Theorem~\ref{t:uniform}]
Let $k=|R|$ and let $R=\{v_1,v_2,\ldots,v_{k}\}$. For each vertex $u\in L$ let $A_u$ be an array of size $k$ where the $i$'th location in $A_u$, denoted by $A_u[i]$, is a pointer to a list of all of the edges that connect an ancestor of $u$ in $T$ (which may be $u$) to $v_i$. When $u\in L$ arrives then for each pointer $A_u[i]\neq null$ we would like to add $A_u[i]$ to the beginning of $\rl_{v_i}$. Recall that if $A_u[i]$ was already in the reporting list $\rl_{v_i}$ then we remove the other older copy of $A_u[i]$ from $\rl_{v_i}$, in order to keep the space usage linear (as in Section~\ref{s:induced}). When a vertex $v_i\in R$ arrives we scan $\rl_{v_i}$ as in Section~\ref{s:DMOGinduced}, and for each edge $(u,v_i)$ in a $\rl_{v_i}$ that has appeared in the appropriate time frame (since we need to skip the tail of $\rl_v$ as in Section~\ref{s:DMOGinduced}) we scan the list $\tau_u$ (which is still maintained as before) and report all of the appropriate edges. Finally, after $\beta-\alpha+M+1$ time units have passed since the last time $u\in L$ has updated some reporting lists, we remove all of the $A_u[i]\neq null$ from each $\rl_{v_i}$. The rest of the implementation details discuss how the array $A_u$ can be computed efficiently during the query phase, with low preprocessing time and space.

\paragraph{Constructing arrays.}
We begin by leveraging the tree structure of $T$ where a vertex that arrives implies that all of its ancestors arrived as well.
For each vertex $u\in L$ and for each vertex $v_i\in R$ such that $e=(u,v_i)\in E_D$ we maintain a pointer $next(e)$ to an edge $e'=(u',v_i)$ where $u'$ is the lowest proper ancestor of $u$ in $T$ such that there is an edge from $u'$ to $v_i$. If no such vertex $u'$ exists then $next(e) = null$. It is straightforward to add these pointers in linear time, and their space usage is linear. Thus, it is enough for $A_u[i]$ to be a pointer to $e$ and the list of all of the ancestors of $u$ in $T$ that have edges touching $v_i$ is obtained through the $next(\cdot)$ pointers. Similarly, if for some $i'$ there is no edge $(u,v_{i'})$ then the entry of $A_u[i']$ should point to the edge $(u',v_{i'})$ where $u'$ is the lowest proper ancestor of $u$ in $T$ such that there is an edge from $u'$ to $v_i$. If no such edge exists then $A_u[i']=null$. In order to quickly find all of the entries of $A_u$ when $u$ arrives we will leverage a preprocessing phase.

Let $w:L\rightarrow [d]$ be a \emph{weight} function such that for $u\in L$ the weight $w(u)$ is the degree of $u$ in $G_D$. Notice that the weight of any vertex is at least $\sqrt{\frac d{\lsc}}$ since all of the vertices are assumed to be heavy.

The construction of these arrays will make use of the following procedure.
Let $u$ and $v$ be two vertices in $T$ where $v$ is a proper ancestor of $u$, and assume that $A_v$ has already been constructed.
In order to construct $A_u$ we first initialize all of its entries to $null$. Next we traverse the vertices from $u$ to $v$ by order, and for each vertex $w$ that we encounter on this path, for each edge $(w,v_i)$, if $A_u[i]=null$ then we set $A_u[i]$ to be a pointer to $(w,v_i)$. Once we reach $v$ we fill in the $null$ entries in $A_u$ with their corresponding entries in $A_v$ (some of which may also be $null$). The total time cost of this process is the total number of edges of all of the vertices on this path (notice that every vertex has weight at least 1) which is the sum of the weights of vertices in this path, and another $O(\sqrt{\lsc \cdot d})$ time for initializing $A_u$ and scanning $A_v$.

In order to utilize the construction procedure that was just described, we will maintain arrays only for specially chosen $O(\sqrt{\frac d{\lsc}})$ vertices in $T$, such that whenever we need to construct an array $A_u$ for some vertex $u$ during the query phase where $u$ is not special, the total weight of vertices on the path from $u$ to its closest special ancestor will be $O(\sqrt{\lsc \cdot d})$, so the total time cost for constructing $A_u$ will be $O(\sqrt{\lsc \cdot d})$.

\paragraph{Choosing special vertices.}
We partition $T$ into $O(\sqrt{\frac d{\lsc}})$ small subtrees such that each subtree has total weight $\Omega(\sqrt{\lsc \cdot d})$, except for possibly the subtree containing the root of $T$. This guarantees that the total number of small subtrees is $O(\sqrt{\frac{d}{\lsc}})$. The special vertices are the roots of these small subtrees.

The partitioning is obtained by (greedily) peeling small subtrees in the bottom of $T$. Specifically, let $T_u$ be the subtree of $T$ rooted at $u$ and let $w(T_u)$ be the total weight of vertices in $T_u$. Then we iteratively peel a subtree $T_u$ such that $w(T_u)$ is at least $\sqrt{\lsc \cdot d}$ but the total weight of each subtree of a child of $u$ in $T$ is (separately) less than $\sqrt{\lsc \cdot d}$.
This peeling continues until no such subtree exists, in which case the remaining subtree must have total weight less than $\sqrt{\lsc \cdot d}$
and it is the last small subtree (it also contains the root). It is straightforward to implement the partitioning in linear time using a post-order traversal.
Notice that by the construction method, for any vertex $u$ that is not the root of $T$, the total weight of vertices on the path from $u$ to its closest proper ancestor that is a special vertex is $O(\sqrt{\lsc \cdot d})$.

To compute the arrays for all of the special vertices we use a top-down approach by first constructing the array for the root of $T$ and then we compute each $A_u$ for a special vertex $u$ only after the array for the closest proper special ancestor $v$ of $u$ was constructed. Using the construction procedure above and the property that the total weights of vertices on the path from $u$ to $v$ is $O(\sqrt{\lsc \cdot d})$ the time to construct the array for each special vertex is $O(\sqrt{\lsc \cdot d})$. Since the number of special vertices is $O(\sqrt{\frac d {\lsc}})$ the total preprocessing phase costs $O(d)$ time. Similarly, the process of constructing the array $A_u$ for a vertex $u\in L$ that arrived during a query costs $O(\sqrt{\lsc \cdot d})$ time. Notice that if $u$ is not a special vertex then we construct $A_u$ temporarily when $u$ arrives, use $A_u$ in order to update the reporting lists of vertices in $R$, and then delete $A_u$. Moreover, if $\beta-\alpha+M+1$ time units have passed since the last time $u$ has updated reporting lists, we again construct $A_u$ only temporarily in order to delete the information that should be removed.
\end{proof}

\begin{proof}[Proof of Theorem~\ref{t:non-uniform}]
As in the uniform case let $R=\{v_1,v_2,\ldots,v_{k}\}$.
Each vertex $v_i\in R$ will maintain a cyclic \emph{active window} array $AW_i$ of size $\beta^*-\alpha^*+M+1$ where the $j$'th location in $AW_i$, denoted by $AW_i[j]$, will be a pointer to a list of lists of edges that all need to be reported if $v_i$ will appear in $j-1$ time units from now. Since $k=O(\sqrt{\lsc\cdot d})$, every time a vertex arrives during the query phase we shift all of the active window arrays by one position. This shift is implemented by incrementing the starting position of each active window array in a cyclic manner, costing $O(1)$ time per vertex in $R$.
When a vertex $v_i$ appears, all of the edges pointed to by location $AW_i[1]$ will be reported, in time proportional to the size of the output. In the remainder of this section we focus on how to maintain the active array windows correctly as vertices from $L$ appear during the query phase.

We again leverage the tree structure of $T$ where a vertex that arrives implies that all of its ancestors arrived as well.
Recall that for an edge $e$ its gap boundaries are denoted by $\alpha_e$ and $\beta_e$.
For each vertex $u\in L$ and for each vertex $v_i\in R$ such that $e=(u,v_i)\in E_D$ we maintain an array $next_e$ of size $\beta_e-\alpha_e+1$. It will be helpful to treat the indices of the array as starting from $\alpha_e$ and ending at $\beta_e$.
So for $\alpha_e\leq j \leq \beta_e$, the $j$'th location in $next_e$, denoted by $next_e[j]$, points to an edge $e'=(u',v_i)$ where $u'$ is the lowest ancestor of $u$ in $T$ (possibly $u$ itself) such that:
\begin{itemize}
\item{} There is an edge $e' = (u',v_i)$.
\item{} $\alpha_{e'}\leq j\leq \beta_{e'}$.
\item{} It is not possible to reach $e$ by iteratively following $next_r[j]$.
\end{itemize}
If no such edge exists then $next_e[j] = null$. The total space usage for all of the $next$ pointer arrays is $\rho:=  \sum_{e\in E_D} (\beta_e - \alpha_e+1) \leq d(\beta^*-\alpha^*)$. Also, it is straightforward to construct these arrays in $O(\rho)$ time. Notice that from the way we constructed the $next$ pointer arrays, for a given vertex $u\in L$ all of the multi-edges $(u,v_i)$ whose boundaries contain $j$ form a contiguous list when considering the set of pointers $next_{e}[j]$ for each such edge $e$. We denote the first edge on this contiguous list by $head_{u,i,j}$. Let the weight of a node $u\in L$, denoted by $w(u)$, be the number of pointers of the form $head_{u,i,j}\neq null$. Notice that the total weight of all of the vertices in $T$ is $O(d(\alpha^*-\beta^*))$ (as opposed to the uniform case where the total weight was $O(d)$).

For each vertex $u\in L$ and $v_i\in R$ let $W_{u,i}$ be an array of size $\beta^*-\alpha^*+1$. As before, it will be helpful to treat the indices of the array as starting from $\alpha^*$ and ending at $\beta^*$. The $j$'th location in $W_{u,i}$, denoted by $W_{u,i}[j]$, is a pointer to $head_{u,i,j}$ thereby giving access to a list of all of the edges in $E_d$ that: (1) touch $v_i$, (2) touch an ancestor of $u$ in $T$, and (3) their boundaries contain $j$. When $u\in L$ arrives then for each pointer $W_{u,i}[j]\neq null$ we would like to add $W_{u,i}[j]$ to the list at $AW_{i}[j + 1 +m_{v_i}]$ where $m_{v_i}$ is the length of the subpattern corresponding to $v_i$, since in $j+1+m_{v_i}$ time units from now, if $v_i\in R$ arrives we will want to report all of the edges pointed to by $W_{u,i}[j]$.

If we were to precompute all of the arrays $W_{u,i}$ the time and space would be $O(\lsc \cdot d (\beta^*-\alpha^*))$, which may be rather large. To reduce this cost we use techniques that we used in the uniform case, as follows.

The construction of our array will make use of the following procedure.
Let $u$ and $v$ be two vertices in $T$ where $v$ is a proper ancestor of $u$, and assume that we have already constructed $W_{v,i}$.
In order to construct $W_{u,i}$ we first initialize all of the entries to $null$. Next we traverse the vertices from $u$ to $v$ by order, and for each vertex $w$ that we encounter on this path, for each $head_{w,i,j}$, if $W_{u,i}[j]=null$ then we set $W_{u,i}[j]$ to be a pointer to $head_{w,i,j}$. Once we reach $v$ we fill in the $null$ entries in $W_{u,i}$ with their corresponding entries in $W_{v,i}$ (some of which may also be $null$). The total time cost of this process is the total weight of vertices on this path (notice that every vertex has at least 1 head pointer), and another $O(\sqrt{\lsc \cdot d}(\beta^*-\alpha^*))$ time for initializing $W_{u,i}$ and scanning $W_{v,i}$.

In order to utilize the construction procedure that was just described, we will maintain arrays only for specially chosen $O(\sqrt{\frac d{\lsc}})$ vertices in $T$, such that whenever we need to construct an array $A_u$ for some vertex $u$ during the query phase where $u$ is not special, the total weight of vertices on the path from $u$ to its closest special ancestor will be $O(\sqrt{\lsc \cdot d}(\beta^*-\alpha^*))$, so the total time cost for constructing $W_{u,i}$ will be $O(\sqrt{\lsc \cdot d}(\beta^*-\alpha^*))$.

\paragraph{Choosing special vertices.}
As in the uniform case, we partition $T$ into $O(\sqrt{\frac d{\lsc}})$ small subtrees such that each subtree has total weight $\Omega(\sqrt{\lsc \cdot d}(\beta^*-\alpha^*))$, except for possibly the subtree containing the root of $T$. This guarantees that the total number of small subtrees is $O(\sqrt{\frac{d}{\lsc}})$. The special vertices are the roots of these small subtrees.

The partitioning is obtained by (greedily) peeling small subtrees in the bottom of $T$, as in the uniform case, and costs linear time using a post-order traversal.
Also, as in the uniform case, the total weight of vertices on the path from $u$ to its closest proper ancestor that is a special vertex is $O(\sqrt{\lsc \cdot d}(\beta^*-\alpha^*))$. Computing $W_{v,i}$ for all special vertices $v$ and all $i$ is executed using a top-down approach, as in the uniform case, and the time to construct the array for each special vertex is $O(\sqrt{\lsc \cdot d}(\beta^*-\alpha^*))$. Since the number of special vertices is $O(\sqrt{\frac d {\lsc}})$ the total preprocessing phase costs $O(d(\beta^*-\alpha^*))$ time. Similarly, the process of constructing $W_{u,i}$ for a vertex $u\in L$ that arrived during a query costs $O(\sqrt{\lsc \cdot d}(\beta^*-\alpha^*))$ time. The rest of the details also follow the ideas from the uniform case: $W_{u,i}$ is temporarily constructed when $u$ arrives, and again after $\beta^*-\alpha^*+1 +M$ time in order to delete the information that should be removed.
\end{proof}

\end{document}